\PassOptionsToPackage{unicode}{hyperref}
\PassOptionsToPackage{hyphens}{url}
\PassOptionsToPackage{dvipsnames,svgnames,x11names}{xcolor}
\documentclass[
  12pt]{article}

\usepackage{graphicx}
\usepackage{float}
\restylefloat{table}
\usepackage{subcaption}
\usepackage{array, booktabs, makecell, multirow, longtable}
\usepackage{amsmath, amssymb, amsthm}   
\usepackage{mathtools}                  
\usepackage{mathrsfs}                  
\usepackage{bm}              
\usepackage{makecell}
\usepackage{esvect}                     
\usepackage{cancel}                     
\usepackage{latexsym, amsfonts, amstext}
\usepackage{enumitem}
\newtheorem{theorem}{Theorem}
\newtheorem{lemma}{Lemma}
\newtheorem{procedure}{Procedure}

\newtheorem{proposition}{Proposition}
\newtheorem{remark}{Remark}
\newtheorem{corollary}{Corollary}
\newtheorem{algorithm}{Algorithm}
\usepackage{iftex}
\ifPDFTeX
  \usepackage[T1]{fontenc}
  \usepackage[utf8]{inputenc}
  \usepackage{textcomp} 
\else 
  \usepackage{unicode-math}
  \defaultfontfeatures{Scale=MatchLowercase}
  \defaultfontfeatures[\rmfamily]{Ligatures=TeX,Scale=1}
\fi
\usepackage{lmodern}
\ifPDFTeX\else  
\fi
\IfFileExists{upquote.sty}{\usepackage{upquote}}{}
\IfFileExists{microtype.sty}{
  \usepackage[]{microtype}
  \UseMicrotypeSet[protrusion]{basicmath} 
}{}
\makeatletter
\@ifundefined{KOMAClassName}{
  \IfFileExists{parskip.sty}{%
    \usepackage{parskip}
  }{
    \setlength{\parindent}{0pt}
    \setlength{\parskip}{6pt plus 2pt minus 1pt}}
}{
  \KOMAoptions{parskip=half}}
\makeatother
\usepackage{xcolor}
\setlength{\emergencystretch}{3em} 
\setcounter{secnumdepth}{5}
\makeatletter
\ifx\paragraph\undefined\else
  \let\oldparagraph\paragraph
  \renewcommand{\paragraph}{
    \@ifstar
      \xxxParagraphStar
      \xxxParagraphNoStar
  }
  \newcommand{\xxxParagraphStar}[1]{\oldparagraph*{#1}\mbox{}}
  \newcommand{\xxxParagraphNoStar}[1]{\oldparagraph{#1}\mbox{}}
\fi
\ifx\subparagraph\undefined\else
  \let\oldsubparagraph\subparagraph
  \renewcommand{\subparagraph}{
    \@ifstar
      \xxxSubParagraphStar
      \xxxSubParagraphNoStar
  }
  \newcommand{\xxxSubParagraphStar}[1]{\oldsubparagraph*{#1}\mbox{}}
  \newcommand{\xxxSubParagraphNoStar}[1]{\oldsubparagraph{#1}\mbox{}}
\fi
\makeatother

\usepackage{longtable,booktabs,array}
\usepackage{calc} 
\usepackage{etoolbox}
\makeatletter
\patchcmd\longtable{\par}{\if@noskipsec\mbox{}\fi\par}{}{}
\makeatother
\IfFileExists{footnotehyper.sty}{\usepackage{footnotehyper}}{\usepackage{footnote}}
\makesavenoteenv{longtable}
\usepackage{graphicx}
\makeatletter
\def\maxwidth{\ifdim\Gin@nat@width>\linewidth\linewidth\else\Gin@nat@width\fi}
\def\maxheight{\ifdim\Gin@nat@height>\textheight\textheight\else\Gin@nat@height\fi}
\makeatother
\setkeys{Gin}{width=\maxwidth,height=\maxheight,keepaspectratio}
\makeatletter
\def\fps@figure{htbp}
\makeatother

\addtolength{\oddsidemargin}{-.5in}%
\addtolength{\evensidemargin}{-.1in}%
\addtolength{\textwidth}{1in}%
\addtolength{\textheight}{1.7in}%
\addtolength{\topmargin}{-1in}
\makeatletter
\@ifpackageloaded{caption}{}{\usepackage{caption}}
\AtBeginDocument{%
\ifdefined\contentsname
  \renewcommand*\contentsname{Table of contents}
\else
  \newcommand\contentsname{Table of contents}
\fi
\ifdefined\listfigurename
  \renewcommand*\listfigurename{List of Figures}
\else
  \newcommand\listfigurename{List of Figures}
\fi
\ifdefined\listtablename
  \renewcommand*\listtablename{List of Tables}
\else
  \newcommand\listtablename{List of Tables}
\fi
\ifdefined\figurename
  \renewcommand*\figurename{Figure}
\else
  \newcommand\figurename{Figure}
\fi
\ifdefined\tablename
  \renewcommand*\tablename{Table}
\else
  \newcommand\tablename{Table}
\fi
}
\@ifpackageloaded{float}{}{\usepackage{float}}
\floatstyle{ruled}
\@ifundefined{c@chapter}{\newfloat{codelisting}{h}{lop}}{\newfloat{codelisting}{h}{lop}[chapter]}
\floatname{codelisting}{Listing}

\makeatother
\makeatletter
\makeatother
\makeatletter
\@ifpackageloaded{caption}{}{\usepackage{caption}}
\@ifpackageloaded{subcaption}{}{\usepackage{subcaption}}
\makeatother

\ifLuaTeX
  \usepackage{selnolig}  
\fi
\usepackage[]{natbib}
\bibliographystyle{agsm}
\usepackage{bookmark}

\IfFileExists{xurl.sty}{\usepackage{xurl}}{} 
\urlstyle{same} 
\hypersetup{
  pdftitle={Title},
  pdfauthor={Author 1; Author 2},
  pdfkeywords={3 to 6 keywords, that do not appear in the title},
  colorlinks=true,
  linkcolor={blue},
  filecolor={Maroon},
  citecolor={Blue},
  urlcolor={Blue},
  pdfcreator={LaTeX via pandoc}}

\newcommand{\anon}{1}


\begin{document}

\def\spacingset#1{\renewcommand{\baselinestretch}%
{#1}\small\normalsize} \spacingset{1}


\if1\anon
{
  \title{\bf Equivalence Test for Mean Functions from Multi-population Functional Data}
  \author{Chuang Xu\thanks{The authors are grateful to the Australian Research Council for supporting this research through grant DP220102232\hspace{.2cm}},\, \, Andrew T. A. Wood and  Yanrong Yang\\
    Research School of Finance, Actuarial Studies and Statistics, \\
    Australian National University }
  \maketitle
} \fi

\if0\anon
{
  \bigskip
  \bigskip
  \bigskip
  \begin{center}
    {\LARGE\bf Title}
\end{center}
  \medskip
} \fi

\bigskip
\begin{abstract}
Most existing methods for testing equality of means of functional data from multiple populations rely on assumptions of equal covariance and/or Gaussianity. In this work we provide a new testing method based on a statistic that is distribution-free under the null hypothesis (i.e. the statistic is pivotal), and allows different covariance structures across populations, while Gaussianity is not required. In contrast to classical methods of functional mean testing, where either observations of the full curves or projections are applied, our method allows the projection dimension to increase with the sample size to allow asymptotic recovery of full information as the sample size increases. We obtain a unified theory for the asymptotic distribution of the test statistic under local alternatives, in both the sample and bootstrap cases. The finite sample performance for both size and power have been studied via simulations and the approach has also been applied to two real datasets.  
\end{abstract}

\noindent%
{\it Keywords:} Asymptotic normality; Consistency; Functional data; $k$-sample problems.
\vfill

\newpage
\spacingset{1.8} 

\section{Introduction}\label{sec-intro}

\subsection{Statement of the problem}

Functional data arise from observations of phenomena that are continuous over time or space, and their analysis is crucial in fields such as biology, chemistry, economics, engineering, environmental science, and medicine.  For further details and applications of functional data analysis, we refer to the well-known monographs \cite{Ramsay2005,Ferraty2006} and, for theoretical foundations, see \cite{Hsing2015}.

With complex data structures such as functional data, modern statistics aims to increase the data size by accumulating data from multiple populations and to achieve more accurate statistical inference by exploiting homogeneity across multi-populations when homogeneity is present, while also accounting for any heterogeneity in the combined data. For instance, multi-task learning (\cite{Duan2023}) and transfer learning (\cite{Cai2024}) are machine learning techniques to deal with multiple datasets collected from different sources. 

Prior information on homogeneity and heterogeneity is helpful to attain more efficient statistical inference on integrated data from multiple populations. In this article, we contribute to multi-population functional data in which we wish to test for equality of population mean functions across $k \geq 2$ populations.  In Section 5, we consider real-data applications,  one of which is concerned with human growth data.  In this setting, it is of interest to compare the mean growth curves for males and females without assuming that the covariance functions for these two populations are equal.  

The type of test required is known as a $k$-sample test, where $k \geq 2$ is the number of groups to be compared.
In this article, we propose such a test which incorporates the following desirable properties.
\begin{description}
\item[(i)] The covariance functions of the different populations are allowed to differ.
\item[(ii)] The number of populations can be any fixed $k \geq 2$.
\item[(iii)] It is not assumed that the random functions are Gaussian.  
\item[(iv)] The test statistic is (asymptotically) pivotal, i.e. the limiting distribution under the null hypothesis does not depend on unknown parameters.
\item[(v) ] The test statistic is easy to compute so that bootstrapping is feasible.
\item[(vi)] Within the class of functions considered, the test is consistent against all alternatives, in the sense that, if the null hypothesis of equal mean functions is false, then the null hypothesis is rejected with probability approaching 1 as sample sizes go to infinity.
\end{description}
To the best of our knowledge, no approaches to $k$-sample testing currently in the literature possess all of the above properties.  In the remainder of this section, we will outline the main ingredients of our approach and explain how they relate to procedures in the literature. 

\subsection{Approaches focused  on basis function representations}
We first introduce some notation.  Let $(\Omega, \mathcal{F}, \mathbb{P})$ denote the underlying probability space, where $\Omega$ is the set of possible outcomes. For each $\omega \in \Omega$, a functional observation or process over a compact interval $\mathcal{T} \subset \mathbb{R}$ is written $X(t,\omega)$ and is a smooth function of $t \in \mathcal{T}$.
It is assumed that $X(t,\omega) \in L^2(\mathcal{T})$, where $L^2(\mathcal{T})$ is endowed with usual $L^2$ inner product defined by $\langle f,g\rangle=\int_{\mathcal{T}} f(t)g(t)dt$. From now on, for simplicity we just use $X(t)$ to denote a random function when the context is clear enough, and we assume $\mathcal{T}=[0,1]$.

Let $\{b_l(t)\}_{l=1}^\infty$ denote a pre-specified basis of $L^2[0,1]$.   Broadly speaking, tests for a common mean function may be classified by the
\emph{projection dimension}~$p$—the number of basis
coefficients on which the statistic ultimately depends. First, we consider two cases.

\begin{enumerate}[label=(\alph*),leftmargin=*,itemsep=2pt]
\item \emph{Full‐curve procedures} ($p=\infty$) operate directly on
    densely sampled trajectories, integrating squared pointwise
    differences or employing global $L^{2}$ distances. These methods avoid truncation bias but typically rely on restrictive assumptions
    (e.g., common covariance) and have test statistics that yield non-pivotal limit laws.  Note that for this class of methods, we either need to calculate the functions on fine grids or to calculate the functions at arbitrary locations.

\item \emph{Fixed-dimension projections} ($p$ fixed) operates by projecting curves onto a finite collection of basis elements, after which multivariate techniques are applied.
\end{enumerate}

When a function is expanded in an orthonormal basis, the truncation error after $p$ terms typically decays at a rate that depends on the smoothness of the sample paths. 
Rough trajectories demand many basis elements for an adequate approximation, whereas very smooth trajectories can be well represented by only a few. A well-selected basis also reduces the number of basis functions needed to provide a good approximation.

The above comments suggest letting the projection dimension \emph{grow at a moderate rate with} the sample size.  From our perspective, there is an important trade-off: $p_n$, the projection dimension,  should grow at a sufficiently fast rate for the truncation error in the functional observations to be small; and $p_n$ should grow at a sufficiently slow rate to allow us to use ``fixed $p$'' methods without them breaking down.
Starting with a pre-specified orthonormal basis, for example, Fourier, B-spline, or wavelet functions, we consider the score vectors of dimension $p$ associated with them.
By letting $p_n\to\infty$ we asymptotically recover the full functional information while retaining the pivotality of the test statistics and sufficient computational simplicity to enable use of the bootstrap.
We show that in practice moderate values of $p_n$ suffice when the sample paths are reasonably smooth and the chosen basis is well aligned with the dominant features of the data.

For efficient computation and concise theory, we utilize pre-determined basis projection instead of exploring formal data-driven selection for basis functions.
Some approaches use a data-driven eigenbasis obtained from functional principal component analysis (FPCA), which is attractive for a \emph{single} population.  However, it becomes challenging when $k>1$: distinct covariance operators yield different eigenbases unless equal covariance assumption is made, the computation costs will also be quite large if $k$ is relatively large.  
Anchoring the projection on a common, pre-determined basis circumvents this difficulty and leads to tractable asymptotic results. Pre-determined bases have been used in the linear discriminant analysis literature, see \cite{Xue2024}.


\subsection{Literature Review}
Testing equality of mean curves has been studied for very long time
, beginning with the two–sample, common–covariance setting of
\citet{Fan1998}, who adapted the classical $t$–statistic to functional
data by integrating pointwise tests.  Subsequent extensions to
$k>2$ groups include integrated and maximum $F$–type procedures
\citep{Cuevas2004,Zhang2014a}, permutation–based MANOVA
\citep{CuestaAlbertos2010}, and principal–component projections
\citep{Horvath2015}.  These methods rely on (at least asymptotic)
Gaussianity and homoscedastic covariances.

Heteroscedastic designs have been addressed by
\citet{JimenezGamero2021}, who proposed a self–normalising statistic
with standard–normal limits when the number of groups grows with
sample size. 

The recent work of \citet{Munko2023}  develops a stepdown procedure that controls the
family‐wise error rate across a broad class of functional hypotheses.
Although their theory permits non‐Gaussian data, it is geared toward
\emph{multiple} pairwise comparisons rather than the global equality
null considered here, moreover only diagonal information of the covariance functions were utilised in their paper.
Closely related high‐dimensional tools include
the functional discriminant analysis of \citet{Xue2024} and the
functional graphical‐model tests of \citet{Qiao2019}.

\subsection{Our contributions}
This study advances methodology for $k$–sample functional mean testing in several respects.

\begin{enumerate}[label=(\textbf{C\arabic*})]
  \item \textbf{Fully heteroscedastic, non–Gaussian framework.}  Unlike classical functional ANOVA procedures that assume identical covariance operators and (sub‐)Gaussian curves, our test accommodates \emph{group‐specific} covariance functions and only requires finite fourth moments.  This greatly broadens its applicability to biomedical, environmental, and other settings where variance heterogeneity and moderately heavy‐tailed behaviour are the rule rather than the exception.

  \item \textbf{Covariance‐adapted test statistic.}  The proposed quadratic form explicitly incorporates consistent estimators of each group’s covariance operator.  By exploiting this second‐order information, the test achieves higher local power than covariance‐agnostic competitors, as confirmed by both theory and simulations.

  \item \textbf{Unified limit theory under the null and local alternatives.}  We establish an increasing-dimension  Central Limit Theorem that covers the full sequence of local alternatives
        with $H_0$ (\(\delta_j\equiv0\)) as a special case.  The resulting Gaussian approximation provides a single theoretical umbrella for size control and local power analysis.

  \item \textbf{Bootstrap validity with negligible extra cost.}  Despite the heteroscedastic setting, a standard bootstrap remains valid under the condition that the number of scores does not grow at too fast a rate; we prove an \emph{unconditional} CLT for the bootstrapped statistic that mirrors its population counterpart where the \emph{conditional} CLT takes part in the proof.
\end{enumerate}

Collectively, these features deliver a broadly applicable testing procedure that is theoretically sharp, computationally light, and empirically powerful across a wide spectrum of functional data environments.

\subsection{Structure of the article} 
The remainder of the article is structured as follows.  
Section~2  introduces the proposed test statistic, details its computation and
presents the accompanying asymptotic theory.  In Section~3, choice of basis and choice of $p$ are discussed.   
Section~4 reports extensive Monte~Carlo studies that assess the finite‐sample behaviour of the procedure, while Section~5 illustrates its use in a real–data application.  
The statement of key technical lemmas and proofs of the main theorems  are collected in Section~6, while the proofs of the technical lemmas are given in the Supplementary Materials.

\section{Hypothesis test for k samples}
In this section, we first present the statistical model and hypothesis. Then we discuss the technical assumptions which are needed to prove the unified asymptotic theory of the test statistic and the bootstrapped version of the test statistic.  

\subsection{Preliminaries}
In this section we review the hypothesis testing problem and outline our viewpoint.\\
Let \( \{X_{j1}(t), \ldots, X_{jn_j}(t)\}_{j=1}^k \) be independent random functions, where for each group \( j = 1, \ldots, k \), the \( \{X_{ji}(t)\}_{i=1}^{n_j} \) are independent and identically distributed stochastic processes in \(  L^2(\mathcal{T}) \), where $\mathcal{T}$ is a compact interval over the real line. We denote the mean and covariance functions by
$\mu_j(t) = \mathbb{E}[X_{j1}(t)]$ and $\gamma_j(s, t) = \mathbb{E}\left[ (X_{j1}(s) - \mu_j(s))(X_{j1}(t) - \mu_j(t)) \right]$.

Our goal is to perform the following test:
\begin{equation}
    H_0: \mu_1(t)=...=\mu_k(t) \hskip 0.1truein \textrm{versus} \hskip 0.1truein
    H_1: \mu_i(t) \neq \mu_j(t), \textrm{for some $i\neq j$, $1 \leq i,j \leq k$}. \label{hypothesistest}
\end{equation}
We impose the assumption that
\begin{align}
    \mathbb{E}\left [\left (\int X_{ji}(t)^2dt \right)^4 \right] < \infty. \label{A1}
\end{align}

We project each function onto a pre-specified orthonormal basis \( \{b_l(t)\}_{l=1}^\infty \) of \( L^2[0,1] \), such as the Fourier basis or a wavelet basis. Using this basis expansion, each function admits the representation:
\begin{align*}
    X_{ji}(t) &= \sum_{l=1}^\infty y_{jil} b_l(t), \quad
    y_{jil} = \langle X_{ji}(t), b_l(t) \rangle=\int_T X_{ji}(t)b_l(t)dt.
\end{align*}

Correspondingly, the mean function can be expanded as:
$\mu_j(t) = \sum_{l=1}^\infty \mu_{jl} b_l(t)$, where  $\mu_{jl}=\mathbb{E}[y_{jil}]$.
We define the following vector notation for truncated and infinite expansions:
\begin{align}
\mu_j^{(p)} &= [\mu_{j1}, \ldots, \mu_{jp}]^\top, & \nonumber
y_{ji}^{(p)} &= [y_{ji1}, \ldots, y_{jip}]^\top, & \nonumber
\bar{y}_j^{(p)} &= \frac{1}{n_j} \sum_{i=1}^{n_j} y_{ji}^{(p)},\nonumber \\
 \mu_j^{(\infty)} &= [\mu_{j1}, \mu_{j2}, \ldots]^\top, &
y_{ji}^{(\infty)} &= [y_{ji1}, y_{ji2}, \ldots]^\top, & 
 \bar{y}_j^{(\infty)} &= \frac{1}{n_j} \sum_{i=1}^{n_j} y_{ji}^{(\infty)}. \label{ypdefinition}
\end{align}

The covariance function \( \gamma_j(s, t) \) induces a covariance operator defined by
\[
(K_j f)(t) = \int \gamma_j(s, t) f(s) \, ds, \quad \text{for } f \in L^2[0,1].
\]

We investigate the relationship between the operator \( K_j \) and the covariance matrix of the infinite-dimensional score vector \( y_j^{(\infty)} \). Since the basis is fixed, the population covariance matrix is given by 
$\Sigma_j = \operatorname{Cov}(y_{j1}^{(\infty)}) = \mathbb{E}\left[(y_{j1}^{(\infty)} - \mu_j^{(\infty)})(y_{j1}^{(\infty)} - \mu_j^{(\infty)})^\top\right]$.
Using standard calculations for $j=1, \ldots , k$ and $i=1, \ldots , n_j$,
\begin{align*}
\mathbb{E}[(y_{jil} - \mu_{jl})(y_{jir} - \mu_{jr})] 
&= \mathbb{E}\left[ \int (X_{j1}(t) - \mu_j(t)) b_l(t) \, dt \int (X_{j1}(s) - \mu_j(s)) b_r(s) \, ds \right] \\
&= \int \gamma_j(s, t) b_l(t) b_r(s) \, ds dt \\
&= \langle K_j b_l, b_r \rangle.
\end{align*}

Thus, the covariance matrix can be expressed as
$\Sigma_j = \left\{ \langle K_j b_l, b_r \rangle \right\}_{l, r = 1}^\infty$.
Because the basis is fixed, the eigenvalues and eigenvectors of the covariance operator \( K_j \) coincide with those of \( \Sigma_j \). This can be viewed as an infinite-dimensional analogue of the invariance of eigensystems under basis transformation.

The truncated \( p \times p \) covariance matrix is
$\Sigma_{j,p} = \left\{ \langle K_j b_l, b_r \rangle \right\}_{l, r = 1}^p = \operatorname{Cov}(y_{j1}^{(p)})$.
The sample covariance function is defined by:
\[
\hat{\gamma}_j(s, t) = \frac{1}{n_j} \sum_{i=1}^{n_j} \left( X_{ji}(t) - \bar{X}_j(t) \right) \left( X_{ji}(s) - \bar{X}_j(s) \right),
\]
and the associated sample covariance operator is denoted by \( \hat{K}_j \). The sample covariance matrices of the truncated scores are given by:
\begin{align}
    \hat{\Sigma}_{j,p} = \left\{ \langle \hat{K}_j b_l, b_r \rangle \right\}_{l, r = 1}^p 
= \frac{1}{n_j} \sum_{i=1}^{n_j} \left( y_{ji}^{(p)} - \bar{y}_j^{(p)} \right) \left( y_{ji}^{(p)} - \bar{y}_j^{(p)} \right)^\top, \label{sampletrunccov}
\end{align}
which is the sample covariance matrix of the truncated score vectors.
Given this formulation, we reinterpret the hypothesis test (\ref{hypothesistest}) as:
\begin{equation}
    H_0\!: \quad \mu_1^{(\infty)} = \cdots = \mu_k^{(\infty)} \hskip 0.1truein \textrm{versus} \hskip 0.1truein 
    H_1\!: \quad \mu_i^{(\infty)} \neq \mu_j^{(\infty)} \quad \text{for some } i \neq j. \label{hypothesistest2}
\end{equation}

Our goal is to design a test that incorporates as much information from the covariance structure as possible, without requiring strong assumptions about the covariance operators and while allowing for covariance information to depend on $j$.
\subsection{A likelihood ratio test statistic}
We propose a likelihood ratio-type statistic in the multivariate setting and study its asymptotic behavior as the truncation dimension \( p \to \infty \). Since truncated basis expansions approximate the functions increasingly well as \( p \) grows, no information is lost asymptotically.

The likelihood ratio test statistic is:
\begin{align}
    T_{\mathrm{FLRT}} = \sum_{j=1}^k n_j (\bar{y}_j^{(p)} - \hat{\mu}^{(p)})^\top \hat{\Sigma}_{j,p}^{-1} (\bar{y}_j^{(p)} - \hat{\mu}^{(p)}), \label{finitedim LRT}
\end{align}
where
\[
\hat{\mu}^{(p)} = \left( \sum_{j=1}^k n_j \hat{\Sigma}_{j,p}^{-1} \right)^{-1} \left( \sum_{j=1}^k n_j \hat{\Sigma}_{j,p}^{-1} \bar{y}_j^{(p)} \right)
\]
is the weighted pooled mean estimator. $\bar{y}_j^{(p)}$ and $\hat{\Sigma}_{j,p}$ are defined in (\ref{ypdefinition}) and (\ref{sampletrunccov}) respectively.\\
In fact, $T_{FLRT}$ is nothing more than the classical Gaussian likelihood‐ratio statistic for comparing $k$ multivariate normal populations, with each true covariance matrix $\Sigma_j$ replaced by its sample analogue $\widehat\Sigma_j$. Justified asymptotically by CLT,  standard results from multivariate analysis then yield the following.

\begin{proposition}\label{lem:flrt-chi2}
Let $p$ be fixed, and for each group $j=1,\dots,k$, let
$\{y_{ji}^{(p)}\}_{i=1}^{n_j}$
be independent and identically distributed $p$-variate observations, with the groups independent of each other.  Denote the total sample size by $n=\sum_{j=1}^k n_j$, assume that $\dfrac{n_j}{n} \to c_j>0$ and the existence of second moments.  Then, as $n\to\infty$,
$T_{FLRT}
\;\xrightarrow{d}\;
\chi^2_{p(k-1)}$.
\end{proposition}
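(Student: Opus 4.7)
The plan is to treat $T_{FLRT}$ as the standard Wald-type statistic for the heteroscedastic $k$-sample mean-equality problem in fixed dimension $p$: use the multivariate CLT and the LLN to pass to the population Gaussian limits, then identify the resulting quadratic form as the squared norm of a standard Gaussian projected by an idempotent matrix of rank $(k-1)p$.

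Under $H_0$, write $\mu := \mu_1^{(p)} = \cdots = \mu_k^{(p)}$. The classical multivariate CLT within each group, combined with independence across groups, gives the joint weak limit $U_j := \sqrt{n_j}(\bar y_j^{(p)} - \mu) \Rightarrow V_j \sim N_p(0, \Sigma_{j,p})$ with $V_1,\dots,V_k$ independent; by the LLN (second moments of the scores are provided by Cauchy--Schwarz together with assumption~(\ref{A1})), $\hat\Sigma_{j,p} \to \Sigma_{j,p}$ in probability, and hence $\hat\Sigma_{j,p}^{-1} \to \Sigma_{j,p}^{-1}$ by continuous mapping (the population matrices being positive definite). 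Linearizing the pooled estimator yields
\[
\sqrt{n_j}\bigl(\bar y_j^{(p)} - \hat\mu^{(p)}\bigr) \;=\; U_j - \sqrt{n_j/n}\, M_n^{-1} \sum_{l=1}^k \sqrt{n_l/n}\, \hat\Sigma_{l,p}^{-1} U_l,
\]
with $M_n := \sum_l (n_l/n)\hat\Sigma_{l,p}^{-1} \to M := \sum_l c_l\,\Sigma_{l,p}^{-1}$ in probability. Slutsky and the continuous mapping theorem then deliver
\[
T_{FLRT} \;\Rightarrow\; \sum_{j=1}^k Z_j^\top \Sigma_{j,p}^{-1} Z_j, \qquad Z_j := V_j - \sqrt{c_j}\, M^{-1} \sum_{l=1}^k \sqrt{c_l}\, \Sigma_{l,p}^{-1} V_l.
\]

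The last step is algebraic. Set $W_j := \Sigma_{j,p}^{-1/2} V_j$, so that $W := (W_1^\top, \dots, W_k^\top)^\top \sim N_{kp}(0, I_{kp})$, and define $B_j := \sqrt{c_j}\,\Sigma_{j,p}^{-1/2}$ and $B := (B_1 \mid \cdots \mid B_k) \in \mathbb{R}^{p \times kp}$, so that $B B^\top = M$. A direct substitution shows that the stacked residual $(\Sigma_{1,p}^{-1/2} Z_1, \dots, \Sigma_{k,p}^{-1/2} Z_k)^\top$ equals $P W$, where $P := I_{kp} - B^\top M^{-1} B$ is symmetric and idempotent, with $\operatorname{tr}(P) = kp - \operatorname{tr}(M^{-1} B B^\top) = (k-1)p$. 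Hence $T_{FLRT} \Rightarrow W^\top P W \sim \chi^2_{(k-1)p}$.

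The only step deserving any care, and in my view the main (if modest) obstacle, is the bookkeeping of the heteroscedastic weighted pooling: tracking the $\sqrt{n_j/n}$ versus $\sqrt{n_l/n}$ factors correctly in the linearization and propagating the symmetric square roots $\Sigma_{j,p}^{-1/2}$ through to the rank calculation for $P$. All other ingredients, namely the multivariate CLT, the LLN for sample covariances, the continuous mapping theorem, and the fact that a symmetric idempotent $P$ gives $W^\top P W \sim \chi^2_{\operatorname{rank}(P)}$ for $W \sim N(0,I)$, are off-the-shelf.
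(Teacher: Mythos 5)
Your proof is correct and follows essentially the same route the paper intends: the paper only gives a one-line appeal to the multivariate CLT, consistency of the sample covariances, and the continuous mapping theorem, and your argument is a complete, correctly executed version of exactly that, with the idempotent matrix $P = I_{kp} - B^\top M^{-1}B$ of trace $(k-1)p$ being precisely the population-weighted analogue of the paper's projector $Q = I_{pk} - \Sigma B \Sigma^\top$ used later in Theorem~\ref{theorem1}. No gaps.
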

The proposition follows directly from the continuous mapping theorem, e.g. (\cite{Dudley2014}).\\
Finally, we assume a multivariate model for the scores:
\begin{align}
    y_{ji}^{(p)} = \Sigma_{j,p}^{1/2} z_{ji} + \mu_j^{(p)}, \label{modelassumption}
\end{align}
where $\Sigma_{j,p}^{1/2}$ is the unique symmetric nonnegative definite square root of $\Sigma_{j,p}$, \( z_{ji} \in \mathbb{R}^p \) has independent entries and $\{z_{ji}\}_{i=1}^{n_j}$ are i.i.d satisfying \( \mathbb{E}[z_{ji}] = 0 \) and \( \operatorname{Var}(z_{ji}) = I_p \), by (\ref{A1}) we see that $\mathbb{E}(z_{jil}^8) < \infty$.

From now on, we will work exclusively with the truncated \( p \)-dimensional representation, so we omit superscripts or subscripts indicating \( p \) for notational simplicity. All vectors and matrices should be understood as being of dimension \( p \), unless otherwise specified.  Define $d_j = \mu_1 - \mu_j$, for all $j = 1, \dots, k$, and
\begin{align}
v &= 
\begin{bmatrix}
    n_1^{1/2} \Sigma_1^{-1/2} d_1 \\
    \vdots \\
    n_k^{1/2} \Sigma_k^{-1/2} d_k
\end{bmatrix}, 
\quad
v_d = 
\begin{bmatrix}
    \Sigma_1^{-1/2} d_1 \\
    \vdots \\
    \Sigma_k^{-1/2} d_k
\end{bmatrix},
\quad
\Sigma =\begin{bmatrix}
    n_1^{1/2} \Sigma_1^{-1/2} \\
    \vdots \\
    n_k^{1/2} \Sigma_k^{-1/2}
\end{bmatrix},
\quad
B = \left( \sum_{j=1}^k n_j \Sigma_j^{-1} \right)^{-1}.
\label{definition of v and sigma}
\end{align}
Note that $P=\Sigma B \Sigma^\top$ is a projection matrix, i.e. $P^2=P^\top=P$.  In (\ref{definition of v and sigma}), $v_d$ determines the local alternative and satisfies
\begin{align}
    ||v_d||^2=o\left (\dfrac{p}{n}\right). \label{localalt}
\end{align}
\begin{theorem} \label{theorem1}
    Consider $T_{FLRT}$, the statistic defined in (\ref{finitedim LRT}), the model for scores defined in (\ref{modelassumption}), $P=\Sigma B \Sigma^\top$ as defined in (\ref{definition of v and sigma}), $Q=I_{pk}-P$ and $n=\sum_{j=1}^k n_j$. Assume that $\dfrac{n_j}{n} \to c_j>0$, $(max_{1 \leq j \leq k} \lambda_{j1}^{-2}) p =o(\sqrt{n})$ and (\ref{localalt}) is satisfied. Then we have the following CLT:
    \begin{align}
        W=\dfrac{T_{FLRT}-p(k-1)-v^\top Qv}{\sqrt{2p(k-1)}} \to N(0,1),
    \end{align}
    where $\lambda_{j1}\le \lambda_{j2}\le \cdots\le \lambda_{jp}$ denote the eigenvalues of $\Sigma_j$ in non-decreasing order.
\end{theorem}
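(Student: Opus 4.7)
The plan is to reduce the problem to a classical CLT for a quadratic form in independent scalar random variables, after neutralising the estimation error carried by the sample covariances. I would first introduce the oracle analogue
\[
T^{*}_{FLRT}=\sum_{j=1}^{k}n_{j}(\bar{y}_{j}-\tilde{\mu})^{\top}\Sigma_{j}^{-1}(\bar{y}_{j}-\tilde{\mu}),\qquad \tilde{\mu}=B\sum_{j}n_{j}\Sigma_{j}^{-1}\bar{y}_{j},
\]
obtained from $T_{FLRT}$ by substituting the population covariances for their sample analogues, and show $|T_{FLRT}-T^{*}_{FLRT}|=o_{p}(\sqrt{p})$. The ingredients are the spectral concentration $\|\hat{\Sigma}_{j}-\Sigma_{j}\|_{\mathrm{op}}=O_{p}(\sqrt{p/n})$ (available under the moment assumption on the scores) and the perturbation bound $\|\hat{\Sigma}_{j}^{-1}-\Sigma_{j}^{-1}\|_{\mathrm{op}}=O_{p}(\lambda_{j1}^{-2}\sqrt{p/n})$. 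Combined with $\|\bar{y}_{j}-\hat{\mu}\|^{2}=O_{p}(p/n)$ and a parallel control of $\hat{\mu}-\tilde{\mu}$ (which uses the identity $\sum_{j}n_{j}\Sigma_{j}^{-1}(\bar{y}_{j}-\tilde{\mu})=0$), the gap becomes of order $O_{p}(\lambda_{j1}^{-2}p^{2}/n)$, and the spectral condition $(\max_{j}\lambda_{j1}^{-2})p=o(\sqrt{n})$ makes this precisely $o(\sqrt{p(k-1)})$.

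With the reduction in hand, I rewrite the oracle statistic in closed form. Setting $\xi_{j}=n_{j}^{-1/2}\sum_{i}z_{ji}$, so that $u_{j}:=n_{j}^{1/2}\Sigma_{j}^{-1/2}(\bar{y}_{j}-\mu_{1})=\xi_{j}-v_{j}$ with $v_{j}$ the $j$-th block of $v$, and stacking into $U=(u_{1}^{\top},\dots,u_{k}^{\top})^{\top}$, a direct computation yields $\tilde{\mu}-\mu_{1}=B\Sigma^{\top}U$, and hence $(n_{j}^{1/2}\Sigma_{j}^{-1/2}(\bar{y}_{j}-\tilde{\mu}))_{j}=QU$. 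Therefore $T^{*}_{FLRT}=U^{\top}QU$; writing $\xi=(\xi_{1}^{\top},\dots,\xi_{k}^{\top})^{\top}$ and expanding $U=\xi-v$ gives
\begin{equation*}
T^{*}_{FLRT}-p(k-1)-v^{\top}Qv=\bigl(\xi^{\top}Q\xi-p(k-1)\bigr)-2\,\xi^{\top}Qv.
\end{equation*}
Since $\xi$ has covariance $I_{pk}$, the cross term satisfies $\mathrm{Var}(\xi^{\top}Qv)=v^{\top}Qv\le\|v\|^{2}\le n\|v_{d}\|^{2}=o(p)$ by (\ref{localalt}), so Chebyshev gives $\xi^{\top}Qv/\sqrt{p(k-1)}\xrightarrow{\mathbb{P}}0$.

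For the quadratic part, stacking $Z=(z_{11}^{\top},\dots,z_{kn_{k}}^{\top})^{\top}\in\mathbb{R}^{np}$, one has $\xi=MZ$ for an explicit $pk\times np$ block matrix $M$ with $MM^{\top}=I_{pk}$. Consequently $A:=M^{\top}QM$ satisfies $A^{2}=M^{\top}Q^{2}M=A$, i.e.\ $A$ is itself a projection of rank $\mathrm{tr}(A)=\mathrm{tr}(Q)=p(k-1)$ with $\|A\|_{\mathrm{op}}=1$; moreover a short calculation shows its diagonal entries are $O(1/n)$, so the kurtosis correction $\sum_{i}(\mathbb{E}z_{i}^{4}-3)A_{ii}^{2}$ is negligible relative to $2\mathrm{tr}(A^{2})$. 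Because $Z$ has independent mean-zero unit-variance entries with finite eighth moment and $\|A\|_{\mathrm{op}}^{2}/\mathrm{tr}(A^{2})=1/[p(k-1)]\to 0$, a standard CLT for quadratic forms in independent random variables (for instance via a martingale-difference decomposition of de~Jong type) yields $(Z^{\top}AZ-p(k-1))/\sqrt{2p(k-1)}\xrightarrow{d}N(0,1)$, which combined with the preceding steps completes the proof.

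The principal obstacle is the oracle reduction. The bound $|T_{FLRT}-T^{*}_{FLRT}|=o_{p}(\sqrt{p})$ must be proved when $p$ grows with $n$ and $\lambda_{j1}$ may shrink with $p$; the condition $(\max_{j}\lambda_{j1}^{-2})p=o(\sqrt{n})$ is calibrated precisely to absorb the quadratic amplification induced by inverting $\hat{\Sigma}_{j}$. Once this approximation is in place, the algebraic identity $T^{*}_{FLRT}=U^{\top}QU$ and the projection-quadratic CLT for $A$ are comparatively routine.
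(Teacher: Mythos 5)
Your proposal follows essentially the same route as the paper's proof: an oracle reduction replacing the sample covariances by $\Sigma_{j,p}$, controlled via the spectral condition (this is the paper's Lemma~\ref{lemma4} on $\|\hat P-P\|_2$), followed by the identity $T^{\mathcal{P}}_{\mathrm{FLRT}}=U^\top QU$ and a martingale-type CLT for the resulting projection quadratic form (the paper's Lemmas~\ref{lemma2}--\ref{lemma3} combined with the Hall--Heyde theorem). The only cosmetic differences are that you dispose of the cross term $\xi^\top Qv$ separately by Chebyshev and invoke an off-the-shelf de~Jong-type quadratic-form CLT, whereas the paper absorbs the cross term into the martingale variance $\sigma_n^2=2(\mathrm{tr}(Q)+2v^\top Qv)$ and verifies the conditional-variance and Lindeberg conditions by hand.
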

\begin{remark}
    We see from Theorem 1 that the rate of convergence depends on how fast the smallest eigenvalue of the covariance matrix(operator) decreases, this is indeed needed for our test statistic because we kept the inverse sample covariance matrix and for it not to blow up asymptotically one needs such condition, and since we are working with functional data the eigenvalues will go to $0$ at the end. This is not very restrictive condition, due to the fact that in practice, if the functional data has smooth nature, then the smallest eigenvalue goes to zero fast but less scores will be needed, but if the functional data is relatively rough, then the smallest eigenvalue goes to zero at a slower rate, therefore more scores can be used to do the test. The dependence on \(p\) is thus a reasonable dependence because \(p\) also controls how fast the smallest eigenvalue vanishes.
\end{remark}
\begin{remark}
    One sees that $\bar{y}_{j}^{(p)}-\mu_j^{(p)}=O_p(1/\sqrt{n})$ is due to the functional law of large numbers. The dependence on \(p\) is absent in the above due to the nature of functional data. The dependence on $p$ in Theorem 1 is due to whitening and vanishing eigenvalues.
\end{remark}
\begin{corollary}\label{cor:power}
Under the assumptions of Theorem 1, define
\[
s \;=\; \lim_{n,p\to\infty}
\frac{v^\top Q v}{\sqrt{2\,p\,(k-1)}}
\quad.
\]
Then the asymptotic power of the statistic satisfies
\[
\lim_{n,p\to\infty}
\mathbb{P}_{H_a}\!\Bigl(
\dfrac{T_{FLRT}-p(k-1)}{\sqrt{2p(k-1)}} \geq \xi_\alpha
\Bigr)
\;=\;
\begin{cases}
\Phi\bigl(-\xi_\alpha + s\bigr), & s\in\mathbb{R},\\[6pt]
1,                                & s=+\infty,
\end{cases}
\]
where $\xi_\alpha$ is the upper $\alpha$ quantile of $N(0,1)$, i.e. $\xi_\alpha$ is such that $\mathbb{P}[N(0,1)> \xi_\alpha ]=\alpha$.
\end{corollary}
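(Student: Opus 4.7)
The plan is to show that Corollary \ref{cor:power} is essentially a direct consequence of Theorem \ref{theorem1} via a Slutsky-type argument, after algebraically isolating the shift $v^{\top}Qv$. First I would write
\begin{equation*}
\frac{T_{FLRT}-p(k-1)}{\sqrt{2p(k-1)}} \;=\; W \;+\; s_n,\qquad s_n \;:=\; \frac{v^{\top}Q v}{\sqrt{2p(k-1)}},
\end{equation*}
so that the rejection event becomes $\{W \geq \xi_\alpha - s_n\}$. By Theorem \ref{theorem1}, $W\xrightarrow{d} N(0,1)$ along the entire sequence of local alternatives satisfying (\ref{localalt}), which is exactly the regime we work in; the nonnegativity of $s_n$ follows because $Q=I_{pk}-P$ is a projection (symmetric and idempotent, since $P^2=P^{\top}=P$), hence positive semidefinite.

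Next I would split into two cases according to the value of $s=\lim s_n$. If $s\in\mathbb{R}$, then $s_n\to s$ deterministically, so Slutsky's theorem yields $W+s_n\xrightarrow{d} N(s,1)$. Consequently,
\begin{equation*}
\mathbb{P}_{H_a}\!\Bigl(W+s_n\geq\xi_\alpha\Bigr)
\;\longrightarrow\;\mathbb{P}\bigl(N(s,1)\geq\xi_\alpha\bigr)
\;=\;1-\Phi(\xi_\alpha-s)\;=\;\Phi(-\xi_\alpha+s),
\end{equation*}
which matches the first branch of the stated expression.

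For the case $s=+\infty$, one cannot directly invoke Slutsky, so I would argue by tightness: fix $\varepsilon>0$ and choose $M>0$ with $\Phi(M)>1-\varepsilon$. Since $s_n\to\infty$, eventually $\xi_\alpha-s_n<-M$, hence
\begin{equation*}
\mathbb{P}_{H_a}\!\Bigl(W\geq \xi_\alpha-s_n\Bigr)\;\geq\;\mathbb{P}_{H_a}(W\geq -M)\;\longrightarrow\;\Phi(M)\;>\;1-\varepsilon,
\end{equation*}
where the convergence uses Theorem \ref{theorem1} and the continuity of $\Phi$ at $-M$. Letting $\varepsilon\downarrow 0$ yields power $1$.

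The only genuine subtlety worth flagging is consistency between the local-alternative constraint $\|v_d\|^2=o(p/n)$ and the possibility $s=+\infty$: since $\|v\|^2=\sum_j n_j\|\Sigma_j^{-1/2}d_j\|^2=O(n\|v_d\|^2)=o(p)$ and $v^{\top}Q v\leq\|v\|^2$, one has $s_n=o(\sqrt{p})$, so $s$ may well be finite, but the rate still permits $s_n\to\infty$ whenever $\|v_d\|^2$ vanishes strictly slower than $p/n\cdot(\sqrt{p}/p)=1/(n\sqrt{p})\cdot p$; this is simply noted, not an obstacle. The main (mild) technical point is the second case, where one must avoid treating a divergent shift via Slutsky and instead appeal to tightness of $W$; everything else is routine algebra plus invocation of Theorem \ref{theorem1}.
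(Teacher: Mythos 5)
Your proof is correct and is exactly the argument the paper leaves implicit: the corollary is stated without proof as an immediate consequence of Theorem~\ref{theorem1}, obtained by writing the standardized statistic as $W+s_n$ with $s_n=v^{\top}Qv/\sqrt{2p(k-1)}$ and applying Slutsky's theorem when $s$ is finite and a tightness/monotonicity bound when $s_n\to\infty$. The only blemish is the arithmetic in your closing aside about which rates of $\|v_d\|^2$ permit $s=+\infty$, which is garbled but immaterial since it plays no role in the proof itself.
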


\subsection{Bootstrap}
Bootstrap resampling is a standard device for sharpening finite-sample
approximations.  In fixed dimension its higher-order accuracy is well
documented \citep[e.g.,][]{Hall1992}.  Rigorous results for growing
dimension are scarce, yet in the moderate regime considered here
($p\ll n$) we anticipate that the bootstrap will still outperform the
first-order calibration; Section~\ref{sec:sim} confirms
this empirically.  We therefore adopt a groupwise, nonparametric
bootstrap, using random sampling with replacement, and establish its asymptotic validity for our statistic.

Let $y_{ji}\in\mathbb{R}^{p}$ be the score vector for subject
$i=1,\dots,n_{j}$ in group $j=1,\dots,k$, defined in (\ref{ypdefinition}).  From the empirical
distribution of $\{y_{ji}\}$ we draw, with replacement and within each
group, the bootstrap sample
\begin{equation}
  y_{j1}^{*},\dots,y_{jn_{j}}^{*}\qquad(j=1,\dots,k).
  \label{bootstrap resmaples}
\end{equation}
Write $\bar{y}_{j}^{*}$ and ${\Sigma}_{j}^{*}$ for the
bootstrap mean and covariance, and 
\[
  \widehat{\mu}^{*}
    =\Bigl(\sum_{j=1}^{k} n_{j}\,{\Sigma}_{j}^{*-1}\Bigr)^{-1}
       \Bigl(\sum_{j=1}^{k} n_{j}\,{\Sigma}_{j}^{*-1}
             \bar{y}_{j}^{*}\Bigr).
\]
The bootstrap analogue of the finite-dimensional likelihood-ratio
statistic in~\eqref{finitedim LRT} is
\[
  T_{\text{FLRT}}^{*}
    =\sum_{j=1}^{k}
       n_{j}\bigl(\bar{y}_{j}^{*}-\widehat{\mu}^{*}\bigr)^{\!\top}
         {\Sigma}_{j}^{*-1}
       \bigl(\bar{y}_{j}^{*}-\widehat{\mu}^{*}\bigr).
  \label{finitedim boot LRT}
\]
The following theorem ensures the asymptotic validity of the bootstrap method.
\begin{theorem} \label{theorem2}
    Under the same assumption of theorem 1, we have the following unconditional CLT for the bootstrapped version
    \[  W^*=\dfrac{T^*_{FLRT}-p(k-1)-v^\top Qv}{\sqrt{2p(k-1)}} \to N(0,1).\]
\end{theorem}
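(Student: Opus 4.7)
The plan is a three-stage argument: first prove a conditional CLT for $W^{*}$ given the original data, then replace the data-dependent centering by the deterministic quantity $v^{\top}Qv$, and finally transfer the conditional statement to an unconditional one.

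For Stage~1, condition on the original sample so that the bootstrap observations within each group are i.i.d.\ draws from the empirical distribution, with bootstrap-world ``population'' mean $\bar y_j$ and covariance $\hat\Sigma_j$. I would mimic the proof of Theorem~\ref{theorem1} verbatim in this ``bootstrap universe'': write $\bar y_j^{*}-\widehat\mu^{*}$ as a bootstrap noise term plus the bootstrap-world ``alternative'' $\bar y_1-\bar y_j$, whiten by $\Sigma_j^{*-1/2}$, expand the weighted pooled mean $\widehat\mu^{*}$ as in (\ref{definition of v and sigma}), and apply the quadratic-form CLT conditionally. The output is, with probability tending to one,
\[
\frac{T^{*}_{FLRT}-p(k-1)-\hat v^{\top}\hat Q\,\hat v}{\sqrt{2p(k-1)}}\;\Big|\;\text{data}\;\xrightarrow{d}\;N(0,1),
\]
where $\hat v=[n_j^{1/2}\hat\Sigma_j^{-1/2}(\bar y_1-\bar y_j)]_j$ and $\hat Q$ is the sample analogue of $Q$ with every $\Sigma_j$ replaced by $\hat\Sigma_j$. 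To be entitled to use the machinery of Theorem~\ref{theorem1} in this universe I need to verify, with probability tending to one, that (i) the eigenvalue condition survives via $\hat\Sigma_j-\Sigma_j=O_p(\sqrt{p/n})$ plus Weyl's inequality; (ii) assumption (\ref{A1}) transfers to the required bootstrap moments through standard truncation; and (iii) the whitened bootstrap innovations, which are \emph{not} coordinatewise independent, still satisfy the cumulant bounds that drive the quadratic-form CLT.

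Stage~2 reduces to showing $\hat v^{\top}\hat Q\,\hat v-v^{\top}Qv=o_p(\sqrt p)$. Write $\bar y_1-\bar y_j=d_j+[(\bar y_1-\mu_1)-(\bar y_j-\mu_j)]$ and $\hat\Sigma_j^{-1}=\Sigma_j^{-1}+\Delta_j$, where $\|\Delta_j\|_{\mathrm{op}}=O_p(\lambda_{j1}^{-2}\sqrt{p/n})$ by a standard perturbation inequality. Substituting and expanding, the leading term recovers $v^{\top}Qv$ while every cross-term pairs an $O_p(\sqrt{p/n})$ factor with at most $O(\lambda_{j1}^{-1}\sqrt n)$ coming from whitening; the hypothesis $\max_j\lambda_{j1}^{-2}p=o(\sqrt n)$ is precisely what forces each such product to be $o_p(\sqrt p)$, and summing the finitely many cross-terms preserves the bound. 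Stage~3 then combines the conditional CLT with the deterministic substitution via Slutsky's lemma applied to conditional characteristic functions, and dominated convergence removes the conditioning to give the unconditional limit $N(0,1)$.

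The main obstacle will be Stage~2: $\Sigma_j^{-1/2}$ can have very large operator norm when $\lambda_{j1}$ is small, so each perturbation term must be traded against the eigenvalue assumption with care, and the requisite $o_p(\sqrt p)$ bound is tight. A secondary, more technical, obstacle in Stage~1 is verifying that the proof of Theorem~\ref{theorem1} survives once the strict coordinatewise-independent model (\ref{modelassumption}) is dropped in favour of the mere moment conditions induced by the empirical distribution; this is really a question of which moment functionals actually enter the variance computation for the quadratic form, and it is also why the theorem is billed as an \emph{unconditional} CLT with the \emph{conditional} CLT playing only the role of an intermediate device.
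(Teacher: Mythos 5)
Your overall architecture (prove a conditional CLT, then integrate out the conditioning) matches the paper's, and your Stage~3 is exactly the paper's tower-property step. But the two middle ingredients diverge in ways that matter. First, the paper does \emph{not} rerun the martingale CLT of Theorem~\ref{theorem1} inside the bootstrap world; it applies a multivariate Berry--Esseen bound to $K_n^{*}=\sum_m x_m^{*}$ over convex sets, conditionally on the data, and then compares $\phi(K_n^{*},Q^{*})$ with $\phi(K_n^{*},Q)$ using the same perturbation bound as Lemma~\ref{lemma4}, before comparing the Gaussian functional $\phi(Z,Q)$ with $\Phi$ via the chi-square normal approximation. This choice is not cosmetic: the variance computations in Lemmas~\ref{lemma2} and \ref{lemma3} lean on the coordinatewise-independence model (\ref{modelassumption}) for the $z_{ji}$, which the empirical (bootstrap) distribution does not satisfy, so the obstacle you flag at the end of Stage~1 is real and is precisely what the Berry--Esseen route avoids --- it needs only third moments of the $x_m^{*}$ and no structural assumptions on their coordinates.

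Second, and more seriously, Stage~2 as stated fails. If the bootstrap draws are not recentred, your bootstrap-world alternative $\hat v$ contains, besides the population signal, the whitened and $\sqrt{n_j}$-scaled sampling noise $n_j^{1/2}\hat\Sigma_j^{-1/2}\bigl[(\bar y_1-\mu_1)-(\bar y_j-\mu_j)\bigr]$; its $j$-th block is essentially the $j$-th block of the noise vector $\bar z$ (the part coming from $\bar y_1-\mu_1$ is constant across blocks after whitening and is annihilated by $\hat Q$). Hence $\hat v^{\top}\hat Q\,\hat v \approx v^{\top}Qv-2v^{\top}Q\bar z+\bar z^{\top}Q\bar z$, and $\bar z^{\top}Q\bar z=p(k-1)+O_p(\sqrt{p})$: the discrepancy from $v^{\top}Qv$ is of order $p$, not $o_p(\sqrt{p})$, and no rate condition on $\lambda_{j1}$ can rescue it --- your own accounting already gives a block of size $O_p(\lambda_{j1}^{-1/2}\sqrt{p})$, whose square exceeds $\sqrt{p}$. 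The paper's procedure recentres the scores within each group before resampling (Algorithm~1, Step~2), so the bootstrap world satisfies the null exactly, $\mathbb{E}^{*}(K_n^{*})=0$, and there is no data-dependent noncentrality to control; the conditional law of the quadratic form is compared directly with that of $\phi(Z,Q)$ for $Z\sim N(0,I_{pk})$. Without recentring your Stage~2 target is simply false; with recentring Stage~2 disappears, and Stage~1 must be rebuilt around the mean-zero bootstrap sum, which is what the paper does.
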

The following corollary is a bootstrap analogue of Corollary 1.
\begin{corollary}
    Under the assumptions of Theorem 2, define
\[
s \;=\; \lim_{n,p\to\infty}
\frac{v^\top Q v}{\sqrt{2\,p\,(k-1)}}
\quad.
\]
Then the power of the statistic satisfies
\[
\lim_{n,p\to\infty}
\mathbb{P}_{H_a}\!\Bigl(
\dfrac{T^*_{FLRT}-p(k-1)}{\sqrt{2p(k-1)}} \geq \xi_\alpha
\Bigr)
\;=\;
\begin{cases}
\Phi\bigl(-\xi_\alpha + s\bigr), & s\in\mathbb{R},\\[6pt]
1,                                & s=+\infty,
\end{cases}
\]
where $\xi_\alpha$ is the upper $\alpha$ quantile of $N(0,1)$.
\end{corollary}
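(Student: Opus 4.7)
The plan is to derive Corollary 2 as a direct consequence of Theorem 2 in exactly the same way Corollary 1 follows from Theorem 1. The only random object involved is the bootstrap statistic itself; the recentering $v^\top Q v/\sqrt{2p(k-1)}$ is deterministic given the underlying population parameters and sample sizes, so no extra machinery beyond Theorem 2 and Slutsky's theorem should be required.

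First I would rewrite the rejection event by adding and subtracting the noncentrality term. Setting
\[
s_{n,p}\;:=\;\frac{v^\top Q v}{\sqrt{2\,p\,(k-1)}},
\]
I have the identity
\[
\left\{\frac{T^{*}_{FLRT}-p(k-1)}{\sqrt{2p(k-1)}}\ge\xi_{\alpha}\right\}
\;=\;\bigl\{W^{*}\ge\xi_{\alpha}-s_{n,p}\bigr\},
\]
so the power in question is $\mathbb{P}_{H_a}\!\bigl(W^{*}\ge\xi_{\alpha}-s_{n,p}\bigr)$, where by hypothesis $s_{n,p}\to s\in[0,+\infty]$.

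Next, for the case $s\in\mathbb{R}$, I would apply Theorem 2, which gives $W^{*}\Rightarrow N(0,1)$ unconditionally under the local alternative. Since $s_{n,p}$ is a deterministic sequence, Slutsky's theorem yields $W^{*}+s_{n,p}\Rightarrow N(s,1)$, and the continuity of $\Phi$ at $\xi_{\alpha}-s$ together with the portmanteau theorem gives
\[
\mathbb{P}_{H_a}\bigl(W^{*}\ge\xi_{\alpha}-s_{n,p}\bigr)\;\longrightarrow\;1-\Phi(\xi_{\alpha}-s)\;=\;\Phi(-\xi_{\alpha}+s),
\]
which is the stated limit.

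For the case $s=+\infty$, I would argue via tightness. Because $W^{*}\Rightarrow N(0,1)$ by Theorem 2, the sequence $\{W^{*}\}$ is tight, so for any $\varepsilon>0$ there exists $M>0$ with $\mathbb{P}(W^{*}<-M)<\varepsilon$ for all $n$ sufficiently large. Since $\xi_{\alpha}-s_{n,p}\to-\infty$, eventually $\xi_{\alpha}-s_{n,p}<-M$, and hence $\mathbb{P}_{H_a}(W^{*}\ge\xi_{\alpha}-s_{n,p})\ge1-\varepsilon$; letting $\varepsilon\downarrow0$ gives convergence to $1$. This completes the argument. There is no real obstacle here beyond correctly invoking Theorem 2 in the unconditional sense, which is precisely what that theorem supplies; in particular, no additional bootstrap validity argument is needed because the noncentrality parameter is nonrandom.
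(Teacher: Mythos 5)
Your proposal is correct and matches the paper's (implicit) treatment: the paper states Corollary 2 as an immediate consequence of Theorem 2, exactly as Corollary 1 follows from Theorem 1, and the Slutsky-plus-tightness argument you give with the deterministic recentering $s_{n,p}=v^\top Q v/\sqrt{2p(k-1)}$ is precisely the intended derivation. No gaps.
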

The following algorithm can be used as implementation of the bootstrap method. 
\begin{algorithm}
Let $\alpha \in (0,1)$ be the significance level and $B \in \mathbb{N}$ denotes the number of bootstrap samples.
\begin{description}
    \item[\textbf{Step 1}.]Choose basis and dimension $p$; compute the observed scores $y_{j1},...,y_{jn_j},$ \text{ for $1 \leq j \leq k$}.
    \item[\textbf{Step 2}.]  Recentre the observed scores, simulate B times the bootstrap samples using the centred observed scores and get $y_{j1,b}^*,...,y_{jn_j,b}^*,$ \text{for $1 \leq j \leq k$, $1 \leq b \leq B$}.
    \item[\textbf{Step 3}.] Compute the bootstrap test statistic $W^*$ as in theorem (\ref{theorem2}) for all $1 \leq b \leq B$ and the original test statistic $W$.
    \item[\textbf{Step 4}.]  Compute the proportion of the number of times that $W^* \geq W $ for all $1 \leq b \leq B$, denote it by $p_B$.
    \item[\textbf{Step 5}.] Reject the null hypothesis in (\ref{hypothesistest}) if and only if $p_B <\alpha$.
\end{description}
\end{algorithm}
The bootstrap method is basically the same as the fixed dimensional bootstrap method; it is provided here for transparency.

\section{Practical advice for implementation}

In this section, we examine potential limitations of our method and propose strategies to mitigate these potential drawbacks. Although selecting a fixed, deterministic basis set poses no asymptotic issues, practical implementations may encounter challenges. In particular, if the differences between samples are not captured by the leading basis scores—either because the primary discrepancies occur only in later components or they are, approximately, evenly spread across components—the initially chosen basis functions may fail to capture the true differences in the mean functions, if the number of basis functions used is not sufficiently large. We outline several remedies below and describe simulation studies later.\\
From Theorem \ref{theorem1} and Corollary 1 we can see that the noncentrality parameter $v^\top Qv$ plays a primary role in determining the power of our proposed test. We know that $v^\top Qv$ can be computed for all $p \geq 1$ once a basis is chosen. This allows us to detect the performance of the test across the different choices of orthonormal bases and the different numbers of scores used. For present purposes, it would be natural to use the noncentrality parameter $v_d^\top Q v_d$, where $v_d$ is defined in (\ref{definition of v and sigma}).
In practice, however, we are not able to observe the population quantity $v_d^\top Qv_d$, but a sampled version may be used. 
We denote by $\hat{v}_d^\top \hat{Q}\hat{v}_d$ the sample version of $v_d^\top Qv_d$, where the `hat' indicates that we are replacing the population means and covariances by the sample means and covariances, respectively. Classic asymptotic theory shows that
    $\hat{v}_d^\top \hat{Q}\hat{v}_d \xrightarrow{P} v_d^\top Qv_d$ for all $p \geq 1$.
This justifies the use of $\hat{v}_d^\top \hat{Q}\hat{v}_d$ as a practical tool for checking the power under scenarios where $p$ is relatively small compared to $n$.

\subsection{Multiple Bases Diagnostic}
We now propose a type of diagnostic plot for comparing different choices of $L^2$ orthonormal bases and for assessing the choice of $p$.  This plot is generated using the following procedure.
\begin{procedure} Suppose $E=\{e_l\}_{l=1}^\infty$ is an $L^2$ orthonormal basis, $p_{\max}$ is the maximum number of scores considered and $(\hat{v}_d^\top \hat{Q}\hat{v}_d)_p$ denotes $\hat{v}_d^\top \hat{Q}\hat{v}_d$ computed with a particular choice of $p$.  \label{algorithm 2} 
    \begin{description}
    \item[\textbf{Step 1}.] Consider $p=1$, compute the observed scores $y_{j1},...,y_{jn_j},$ \text{ for $1 \leq j \leq k$}.
    \item[\textbf{Step 2}.] Compute the sample means and sample covariance matrices of each population, and then compute $(\hat{v}_d^\top \hat{Q}\hat{v}_d)_1$.
    \item[\textbf{Step 3}.] Repeat the above process for $2 \leq p \leq p_{\max}$, we then observe $\{(\hat{v}_d^\top \hat{Q}\hat{v}_d)_p\}_{p=1}^{p_{max}}$.
    \item[\textbf{Step 4}.] Draw the plot of $(\hat{v}_d^\top \hat{Q}\hat{v}_d)_p$ as a function of $p$.
\end{description}
\end{procedure}
This algorithm generates a plot for a choice of orthonormal basis $E$. If any power exists within the first $p_{\max}$ scores then we would at least expect to see a non-decreasing trend in the plot. If the plot shows a fluctuation below some values that are close to 0 as $p$ varies, then it could be due to one of the following:
(i) the selected basis can not capture the differences between the populations very well using only the first $p_{\max}$ many scores, or (ii) the null hypothesis is true, so there is no difference in means across the populations.

If the plot is indeed increasing, or some big spikes appear, then it indicates that the selected basis is a good candidate for use in the  test statistic and we should have some confidence in the outcome of the test. The sample size affects the volatility of $\hat{v}_d^\top \hat{Q}\hat{v}_d$, therefore one needs more care when looking at the diagnostic plots for small samples.

\subsection{Reordering of Basis Functions Diagnostics}

In this subsection we introduce another strategy for examining what influences power, which is to reorder the basis functions. Since our method is invariant to the ordering, one may order the components by how clearly they distinguish between the different populations.\\
Let $E=\{e_l\}_{l=1}^\infty$ be the chosen $L^2$ orthonormal basis, choose $N \in \mathbb{N}$ to be relatively large and let $p_{c}$ denote the number of scores to be used.   Let $S \subset \{1,...,N\}$ denote a subset of indices to be used, with $|S|=p_{c}$, where $\vert A\vert$ denotes the cardinality of a set $A$. We then define $(\hat{v}_d^\top \hat{Q}\hat{v}_d)_{S}$ to be $\hat{v}_d^\top \hat{Q}\hat{v}_d$ computed with respect to the scores generated by applying $\{e_l\}_{l \in S}$. This allows us to assess the power for different possible combinations of basis functions of the same size. For small $N$ one may compute $(\hat{v}_d^\top \hat{Q}\hat{v}_d)_{S}$ for all possible $S$ and then select the best subset, but this would be computationally heavy if $N$ is large, therefore the following procedure is introduced for implementation and diagnostics:
\begin{procedure} Assume the notation introduced in the preceding paragraph above. \label{algorithm 3} 
\begin{description}
    \item[\textbf{Step 1}.]  Compute one dimensional scores by using $e_l$ for $l \in \{1,...,N\}$ only, then compute $(\hat{v}_d^\top \hat{Q}\hat{v}_d)_{S=\{l\}}$.
    \item[\textbf{Step 2}.] Repeat the process for all $l \in \{1,...,N\}$, we observe $\{(\hat{v}_d^\top \hat{Q}\hat{v}_d)_{l}\}_{l=1}^N$.
    \item[\textbf{Step 3}.]  Draw the plot of $\{(\hat{v}_d^\top \hat{Q}\hat{v}_d)_{l}\}_{l=1}^N$ versus $i=1, \ldots , N$.
\end{description}    
\end{procedure}
The procedure returns a plot that reflects any differences across samples captured by one-dimensional directions. A point whose magnitude is substantially greater than most others while its absolute numerical value is also large is called spike. This plot can be used for diagnostics in the following way:
\begin{enumerate}[label=(\arabic*)]
    \item If no spikes exist in the plot, no particular choice of subset is needed. The same process can be run for other basis if needed.
    \item If spikes exist, pick the subset of basis functions corresponding to the spikes. The cardinality of the subset should be no more than $p_c$. This subset captures a large amount of information on the differences between the samples. Perform the test with respect to this subset.
\end{enumerate}
In general, one should be  careful if the aim is to reorder a given basis using Procedure 2, due to the fact that the data is being  used twice, which potentially could inflate the size of the test. Techniques such as splitting the dataset into training and testing sets can be applied if using Procedure 2 diagnostic plots to select a basis is needed. Using Procedure 1 diagnostic plots to select a basis may not cause much issue in size inflation. 
Typically, practitioners may perform some preliminary analysis on the dataset to see how well commonly used $L^2$ orthonormal bases (e.g. Fourier and Haar bases) captures the salient features of the data with the largest $p$, $p_{\textrm{max}}$, that we wish to consider, and then pick the one that performs the best.


\subsection{The imperfect sampling case}
So far, we have been considering the ideal case where the functional observations are essentially complete. In practice there are situations where the curves are not fully or densely observed and/or measurement error is present.  This case is referred as the imperfect sampling case. A standard set up is as follows:
$Y_{jil}=X_{ji}(t_{l})+\epsilon_{jil}$, where $ 1 \leq j \leq k$, $1 \leq i \leq n_j$ and $1 \leq l \leq m$; the $X_{ji}(t) \in L^2([0,1])$ are random functions; and the $\epsilon_{jil}$ are the measurement errors with mean zero and variance $\sigma^2$, which are independent of each other.\\
Under the sparse set up, $\langle X_{ji},b_q \rangle$ may not be directly observed due to the measurement error and/or small $m$. One could do some pre-smoothing to obtain the scores, which is a common approach in functional data analysis. On the other hand, since the aim is to compute $\langle X_{ji},b_q \rangle$, direct numerical integration is another possible approach. As long as $m$ is not too small, the averaging in numerical integration will make the effect of measurement error small or even negligible, under reasonable conditions. The discrepancy between numerical integration and the true integral may be controlled by both the choice of quadrature method and $m$. Let $\bar{N}_{j,q}$ be the naive estimator for $\langle \mu_{j},b_q \rangle$ using numerical integration. It is shown in Section 8.2 of the Supplemntary Materials that, using the trapezium rule and common sampling design across samples,
\[\bar{N}_{j,q}-\langle \mu_j,b_q\rangle=O_p\left (\frac{1}{\sqrt{n}}\right )+O_p\left (\frac{1}{m^2} \right )+O_p\left (\frac{1}{\sqrt{nm}} \right ),\]
where the first term on the RHS comes from the (functional) law of large numbers; the second term comes from trapezium quadrature rule; and the third term is due to measurement error. Intuitively, even when the curves are sparsely observed, numerical integration  will maintain  accuracy at a suitable  level, under reasonable conditions.

In summary: on the basis of the above comments and Section 8.2 in the Supplementary Materials, we expect that our proposals will give worthwhile results in imperfect sampling settings, under reasonable conditions.
\section{Simulation results}\label{sec:sim}
In the previous sections, we established the asymptotic validity of the proposed methods. In this section, we assess their finite sample performance through a comprehensive simulation study. 
To gauge finite-sample behaviour we simulate data from Gaussian processes (GPs) on $[0,1]$ whose covariance obeys the Matérn form
\[
k_\nu(x,x')
  \;=\;
  \sigma^{2}\,
  \frac{2^{1-\nu}}{\Gamma(\nu)}
  \Bigl\{\sqrt{2\nu}\,\|x-x'\|/\ell\Bigr\}^\nu
  K_\nu\!\left(\sqrt{2\nu}\,\|x-x'\|/\ell\right),
\]
where $\sigma^{2}$ controls marginal variance, $\ell$ the range, 
$\nu>0$ governs mean-square smoothness and $K_\nu(\cdot)$ is a modified Bessel function of the second kind (see Chapter 9 of \cite{Abramowitz1972}).

Each trajectory is observed on a 100-point grid and generated as a draw from
$\mathcal{N}\!\bigl(\mu,\Sigma(\sigma^{2},\ell,\nu)\bigr)$,
with $\Sigma$ obtained by evaluating $k_\nu$ on the grid.
The curve is then projected onto a fixed orthonormal $L^{2}$ basis to obtain
$p$ scores (numerical integration for dense grids; smoothing first if the grid is sparse).
Varying $(\sigma^{2},\ell,\nu)$ and the mean function produces alternative
populations with either mean or covariance differences.
For every data set we compute the proposed test statistic and calibrate its
null distribution via 1000 bootstrap resamples, using a 5\% nominal level.
Unless otherwise stated we report results from
5\,000 Monte Carlo replications.\\
We comapre our test statistic  with the following tests: (i) the $L^2$-norm-based parametric bootstrap test for heteroscedastic samples due to \cite{Cuevas2004}, referred to as \textbf{CS}; (ii) the $L^{2}$-norm-based bootstrap test due to \cite{Zhang2014}, referred to as \textbf{L2b}; (iii) the $F$-type test due to Zhang (2014), referred to as \textbf{Fb}; (iv) the F\(_\text{max}\) bootstrap test of \cite{ZhangChengWuZhou2019}, which takes the pointwise maximum of  \(F\) statistics, referred to as \textbf{Fmaxb}; and (v) the globalising-pointwise-\(F\) test of \cite{Zhang2014a},
      which combines pointwise \(F\) evidence into a single omnibus
      ratio, referred to as \textbf{GPF}.  We refer to \textbf{TLRT} as our proposed test statistic. In the mean time, at all cases for $k=2$, \textbf{T2} is referred to as the usual multivariate Hotelling $T^2$ statistic.

Table \ref{tab:combined_single} gives a study of test size with two populations ($k=2$) with sample sizes $n=(50,30)$.  
Curves are sampled from GPs with Matérn covariance; population~1 uses $(\ell,\sigma^{2})=(1,5)$, population~2 uses $(4,1)$.  
The smoothness parameter $\nu$ is varied to probe its effect on type-I error.  
For each simulated data set we compute TLRT (Haar/Fourier), Hotelling (Haar/Fourier), CS, L2b, Fb, Fmaxb, GPF and Horváth(the fixed dimensional projection method of \cite{Horvath2015}).  
Fmaxb, GPF, and Horváth’s test do not perform well and are therefore omitted from subsequent power analyses. The performance of Horváth's text can be seen in Table \ref{tab:table3} in the Supplementary Materials.\\
Table \ref{tab:table5} gives details of power versus mean-function shape.
The same design ($k=2$, $n=(50,30)$) is retained, but now both GPs share $(\ell,\sigma^{2},\nu)=(1,5,5)$ for Population~1 and $(0.5,1,5)$ for Population~2.  
Mean functions are $\mu_{1}(x)=cx$ and $\mu_{2}(x)=-cx^{2}$ with $0\le c\le1.39$ (the $L^{2}$ distance between $x$ and $-x^{2}$ is~$\approx1$).  
With only two basis functions ($p=2$) the competing methods perform similarly (see Table \ref{tab:table4} in the Supplementary Materials); adding a third score ($p=3$) captures the curvature difference and pushes our test ahead of the alternatives.\\
\begin{table}[H]
    \centering
    \footnotesize
    \caption{Actual size of different tests with varying \(\nu\).  The nominal size is 0.05.}
    \label{tab:combined_single}
    \begin{tabular}{c c c c c c c c c c}
        \toprule
        \(\nu\) & TLRT HB & TLRT FB & T2 HB & T2 FB 
        & CS & L2b & Fb & Fmaxb & GPF \\
        \midrule
        0.5  & 0.0520 & 0.0528 & 0.0558 & 0.0516 & 0.0542 & 0.0550 & 0.0484 & 0.0160 & 0.0274 \\
        1.0  & 0.0480 & 0.0488 & 0.0436 & 0.0486 & 0.0494 & 0.0484 & 0.0420 & 0.0140 & 0.0214 \\
        1.5  & 0.0500 & 0.0504 & 0.0500 & 0.0490 & 0.0554 & 0.0556 & 0.0490 & 0.0182 & 0.0264 \\
        2.0  & 0.0464 & 0.0434 & 0.0456 & 0.0430 & 0.0520 & 0.0524 & 0.0460 & 0.0150 & 0.0216 \\
        5.0  & 0.0462 & 0.0430 & 0.0450 & 0.0448 & 0.0506 & 0.0504 & 0.0446 & 0.0164 & 0.0216 \\
        10.0 & 0.0468 & 0.0456 & 0.0484 & 0.0452 & 0.0542 & 0.0554 & 0.0480 & 0.0168 & 0.0212 \\
        50.0 & 0.0486 & 0.0480 & 0.0476 & 0.0478 & 0.0554 & 0.0568 & 0.0478 & 0.0176 & 0.0244 \\
        \bottomrule
    \end{tabular}
\end{table}
Table \ref{tab:hf-power} considers a high-frequency shift.
Larger samples $n=(500,300)$ are generated from GPs with $(\ell,\sigma^{2},\nu)=(1,5,5)$ and $(0.5,1,5)$.  
The means are $\mu_{1}(x)=x$ and $\mu_{2}(x)=x+c\sqrt{2}\sin(10\pi x)$, $0\le c\le0.6$, so the signal resides in the 11th Fourier coefficient.  
Consequently the Fourier expansion gains power only once $p\ge11$, whereas the Haar basis detects the discrepancy by $p=4$, illustrating the benefit of a flexible basis choice.\\
Figure 1 shows the multiple bases diagnostic plots for Haar and Fourier bases under the same setting as Table \ref{tab:hf-power}. We can see that there is a significant jump for the Haar basis at $p=4$, which matches with what the simulation gives. The Fourier basis starts to have a significant jump at $p=11$, which matches with the simulation.\\
Figure 2 shows the plots formed by Procedure \ref{algorithm 3} under the same setting as Table \ref{tab:hf-power} for unequal mean and an equal mean case with both population means being $x$, where $N=100$ was applied for both cases. For the equal mean plots, we see that for both Haar and Fourier bases, the numerical values of all points are very close to zero, which agrees with our expectation under the null hypothesis. For the unequal mean plots, the Fourier basis shows a significant spike at index $11$, which corresponds with the simulations and the population set-up; the Haar basis shows many points with large absolute numerical value at early basis functions, which also supports the simulation result in Table \ref{tab:hf-power}. 
\begin{table}[H]
    \centering
    \footnotesize
    \begin{tabular}{c c c c c c c c}
        \toprule
        c Value & TLRT Haar & TLRT Fourier & T2 Haar & T2 Fourier & CS & L2b & Fb \\
        \midrule
         0.000 & 0.4580 & 0.0434 & 0.0444 & 0.0434 & 0.0550 & 0.0550 & 0.0484 \\
         0.107 & 0.0736 & 0.0718 & 0.0708 & 0.0684 & 0.0652 & 0.0666 & 0.0566 \\
         0.214 & 0.1560 & 0.1360 & 0.1420 & 0.1250 & 0.0920 & 0.0914 & 0.0806 \\
         0.321 & 0.3380 & 0.2870 & 0.3080 & 0.2650 & 0.1540 & 0.1520 & 0.1370 \\
         0.429 & 0.5730 & 0.4890 & 0.5340 & 0.4670 & 0.2290 & 0.2290 & 0.2120 \\
         0.536 & 0.7780 & 0.6920 & 0.7470 & 0.6620 & 0.3410 & 0.3420 & 0.3170 \\
         0.643 & 0.9260 & 0.8670 & 0.9090 & 0.8440 & 0.5020 & 0.5040 & 0.4700 \\
         0.750 & 0.9770 & 0.9480 & 0.9660 & 0.9350 & 0.6440 & 0.6470 & 0.6160 \\
         0.857 & 0.9960 & 0.9860 & 0.9930 & 0.9830 & 0.7850 & 0.7840 & 0.7560 \\
         0.964 & 1.0000 & 0.9970 & 1.0000 & 0.9950 & 0.8920 & 0.8910 & 0.8740 \\
         1.070 & 1.0000 & 0.9990 & 1.0000 & 1.0000 & 0.9550 & 0.9560 & 0.9460 \\
         1.180 & 1.0000 & 1.0000 & 1.0000 & 1.0000 & 0.9830 & 0.9830 & 0.9780 \\
         1.290 & 1.0000 & 1.0000 & 1.0000 & 1.0000 & 0.9950 & 0.9950 & 0.9930 \\
        \bottomrule
    \end{tabular}
    \caption{Proportion of rejections for different methods at \(p = 3\) across various \(c\) values.}
    \label{tab:table5}
\end{table}

These two figures show that, even when the major difference between populations comes from a specific basis function (in our simulation it is the $11th$ Fourier basis function), changing the basis from Fourier to Haar can help us to detect the difference in earlier components of Haar basis functions. Score(s) that identify differences can be detected by diagnostic plots using Procedure \ref{algorithm 3}.

\begin{table}[t]
\centering
\footnotesize
\caption{TLRT rejection rates ($\alpha=0.05$) under the high–frequency
         mean‐shift design.}
\label{tab:hf-power}

\begin{minipage}{0.30\textwidth}
\centering
\textbf{$p=2,3$}\\[4pt]
\begin{tabular}{cccc}
\toprule
$p$ & $c$ & Haar & Fourier \\ \midrule
\multirow{4}{*}{2}
 & 0.0 & 0.0430 & 0.0436 \\
 & 0.2 & 0.0552 & 0.0506 \\
 & 0.4 & 0.0586 & 0.0504 \\
 & 0.6 & 0.0732 & 0.0568 \\ \midrule
\multirow{4}{*}{3}
 & 0.0 & 0.0470 & 0.0478 \\
 & 0.2 & 0.0528 & 0.0522 \\
 & 0.4 & 0.0550 & 0.0414 \\
 & 0.6 & 0.0686 & 0.0470 \\
\bottomrule
\end{tabular}
\end{minipage}
\hfill
\begin{minipage}{0.30\textwidth}
\centering
\textbf{$p=4,5$}\\[4pt]
\begin{tabular}{cccc}
\toprule
$p$ & $c$ & Haar & Fourier \\ \midrule
\multirow{4}{*}{4}
 & 0.0 & 0.0522 & 0.0548 \\
 & 0.2 & 1.0000 & 0.0530 \\
 & 0.4 & 1.0000 & 0.0494 \\
 & 0.6 & 1.0000 & 0.0498 \\ \midrule
\multirow{4}{*}{5}
 & 0.0 & 0.0542 & 0.0544 \\
 & 0.2 & 1.0000 & 0.0494 \\
 & 0.4 & 1.0000 & 0.0460 \\
 & 0.6 & 1.0000 & 0.0542 \\
\bottomrule
\end{tabular}
\end{minipage}
\hfill
\begin{minipage}{0.30\textwidth}
\centering
\textbf{$p=10,11$}\\[4pt]
\begin{tabular}{cccc}
\toprule
$p$ & $c$ & Haar & Fourier \\ \midrule
\multirow{4}{*}{10}
 & 0.0 & 0.0492 & 0.0484 \\
 & 0.2 & 1.0000 & 0.0502 \\
 & 0.4 & 1.0000 & 0.0472 \\
 & 0.6 & 1.0000 & 0.0466 \\ \midrule
\multirow{4}{*}{11}
 & 0.0 & 0.0586 & 0.0560 \\
 & 0.2 & 1.0000 & 1.0000 \\
 & 0.4 & 1.0000 & 1.0000 \\
 & 0.6 & 1.0000 & 1.0000 \\
\bottomrule
\end{tabular}
\end{minipage}
\end{table}

\begin{figure}[H]          
  \centering
  \includegraphics[width=0.5\textwidth]{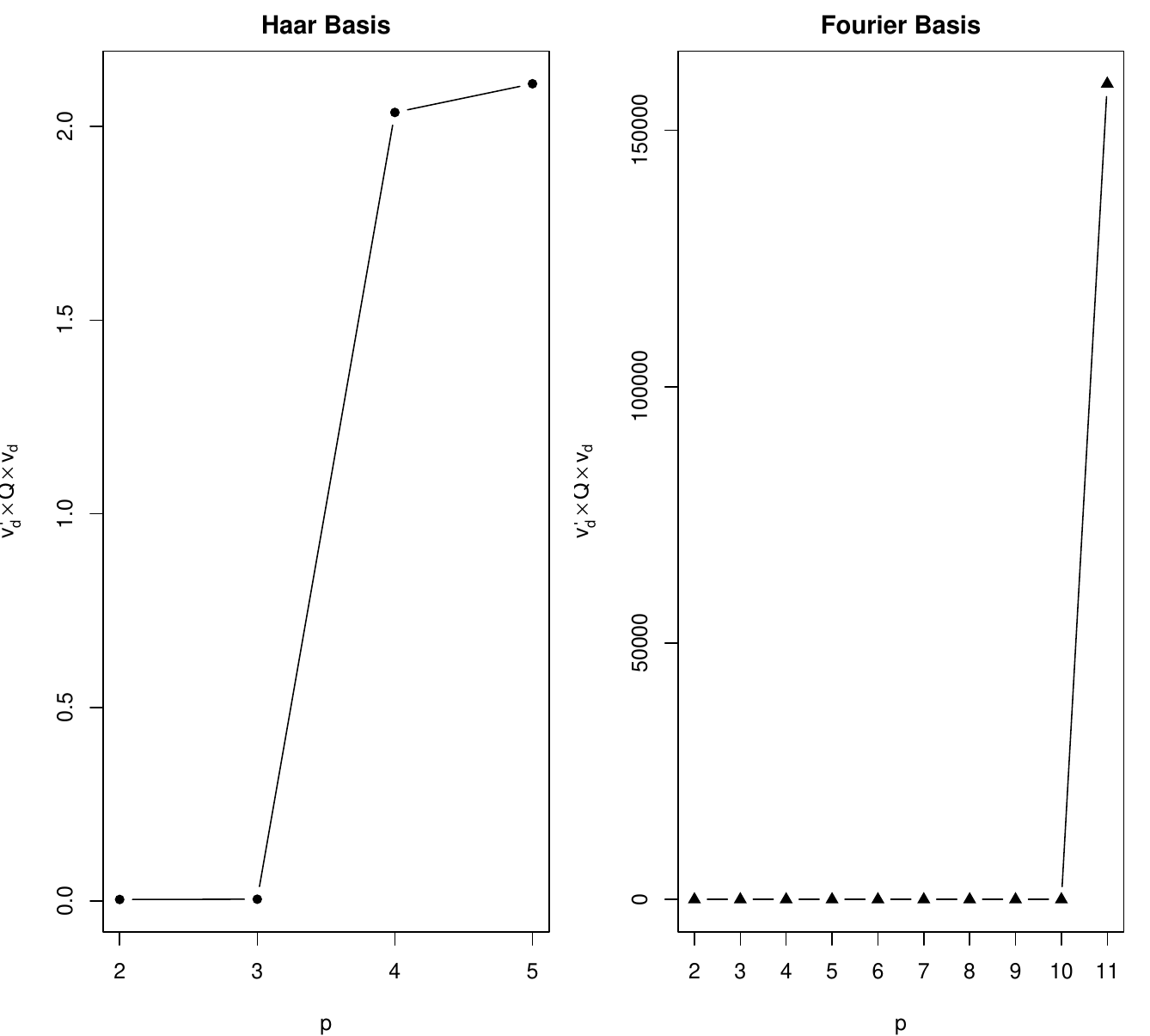}
  \caption{Diagnostic plots for Procedure 1 comparing Haar and Fourier bases.}
  \label{fig:haar-fourier-diagnostic}
\end{figure}
\begin{figure}[H]          
  \centering
  \includegraphics[height=0.5\textwidth]{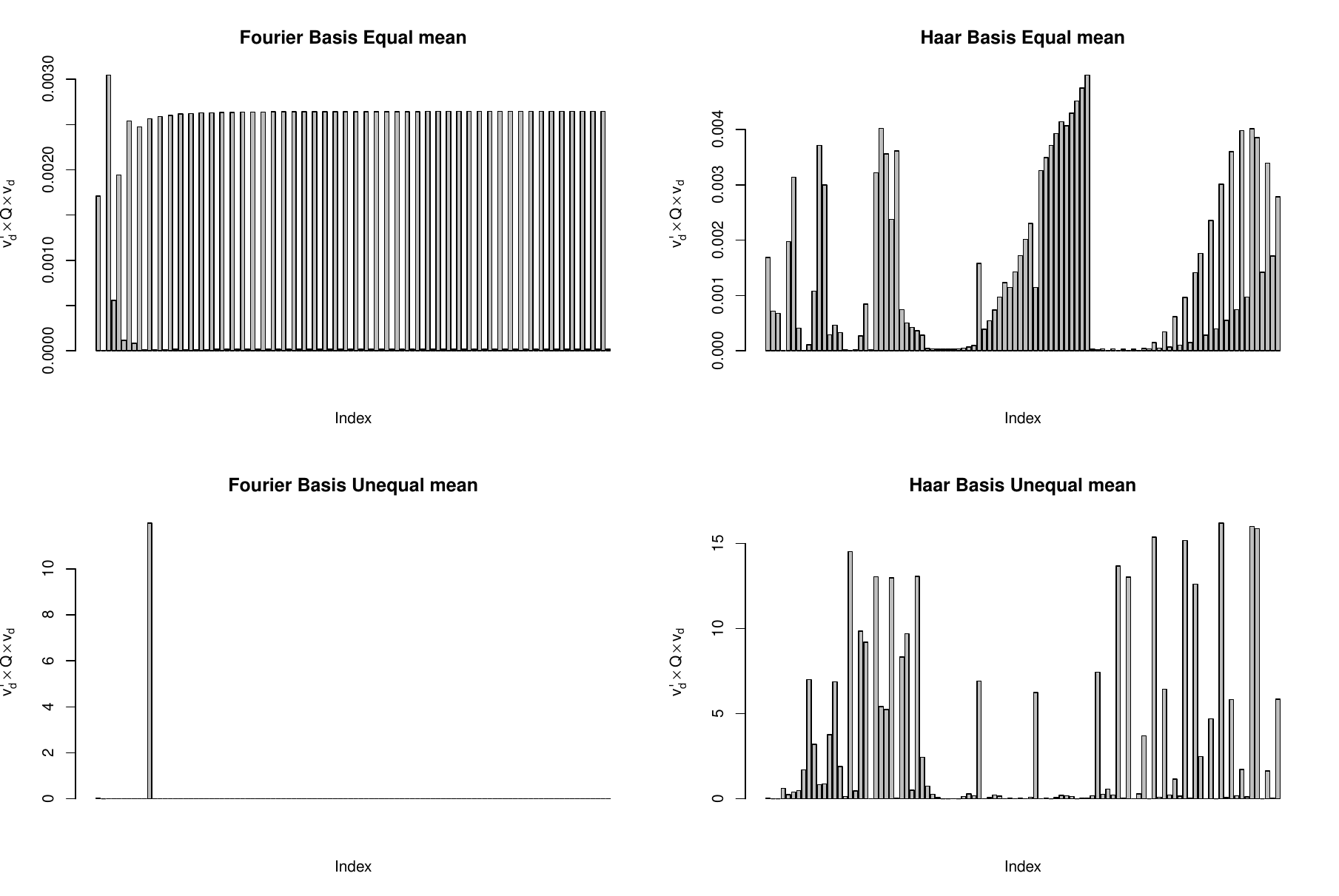}
  \caption{Diagnostic plots for Procedure 2 under different settings.}
  \label{fig:reordering-diagnostic}
\end{figure}

\section{Applications to two real world datasets}
In this section, we analyse two real world data examples using our proposed methods.
\subsection{ECG Data}
We use the \textbf{ECG5000} benchmark from the UCR/UEA Time–Series Classification Archive, where it was originally published in \cite{Goldberger2000PhysioNet}, and underwent pre-processing in \cite{Chen2015NEL}. The series are derived from record \texttt{chf07} of the BIDMC Congestive Heart Failure Database on PhysioNet, an approximately 20-hour, two-lead ambulatory ECG sampled at 250\,Hz from a 48-year-old man (NYHA class III–IV). Each observation is a single heartbeat segment that has been extracted and then interpolated to a common length of 140 samples; the time index is phase-normalized within the R--R interval, and class labels are obtained by automated annotation. Following the archive convention, we adopt the official split of \(n=500\) training and \(n=4{,}500\) test instances. Beats are categorized into five clinically meaningful rhythm types: normal sinus (N), R-on-T premature ventricular contraction (R-on-T PVC), premature ventricular contraction (PVC), supraventricular premature/ectopic beat (SP/EB), and unclassified beat (UB). Table~\ref{tab:ecg5000} reports the class frequencies for the training and test folds. Figure~\ref{ECG} shows representative training traces and classwise mean functions (UB omitted due to its small sample size).
\begin{table}[htbp]
\centering
\footnotesize
\caption{Class distribution in the official UCR/UEA split of the
ECG5000 dataset.}
\label{tab:ecg5000}
\begin{tabular}{lrrr}
\toprule
Rhythm type & Training ($n=500$) & Test ($n=4\,500$)\\
\midrule
Normal (N)                                   & 292 & 2\,627 \\
R-on-T premature ventricular contraction      & 177 & 1\,590 \\
Premature ventricular contraction (PVC)       &  10 &    86 \\
Supra-ventricular premature / ectopic beat    &  19 &   175 \\
Unclassified beat (UB)                        &   2 &    22 \\
\bottomrule
\end{tabular}
\end{table}
\begin{figure}[H]          
  \centering
  \includegraphics[height=0.5\textwidth]{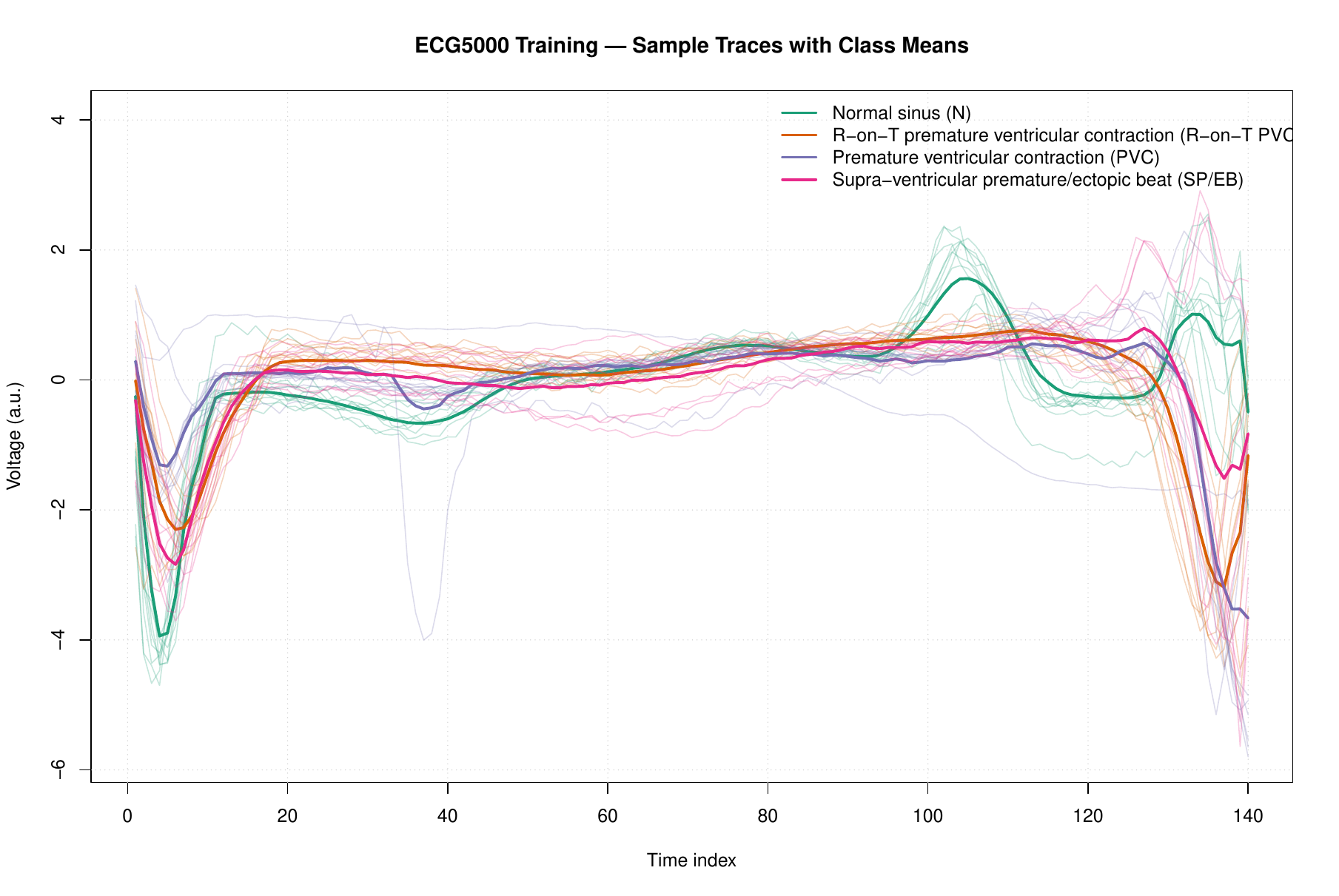}
  \caption{Several training traces per ECG5000 class (thin) with the
classwise mean curve (bold) over 140 time points for the first four classes.}
  \label{ECG}
\end{figure}

Methodologically, \texttt{ECG5000} is well suited to our framework: after phase normalization, each beat is a square-integrable curve on \([0,1]\) whose arrhythmic variations appear as localized shape changes within the P–QRS–T complex. We expand the curves in a common orthonormal basis and use the resulting projection scores to implement our proposed test of equality of mean functions across rhythm categories.

Two different hypothesis tests have been performed with various choices of the number of scores $p$. 
For Table \ref{tab:tlrt-split} (see the Supplementary Materials) we see that Haar rejects starting at $p=2$, where Fourier does not reject at $p=2$ and rejects for the rest.
For Table \ref{tab:tlrt-split-run2} we see that Haar rejects at $p=2$, and Fourier starts to reject at $p=6$. The sample size for the hypothesis in Table \ref{tab:tlrt-split-run2}, $n=(10,19)$, is quite small.  Therefore there could be a large discrepancy between the bootstrap p-value and the asymptotic p-value.

\begin{table}[H]
  \centering
  \footnotesize
  \caption{Summary under $H_{0}\!: \mu_1=\mu_2=\mu_{3}=\mu_{4}$. For $p=5,6,7,8,9$, all $p$-values were zero to three decimal places.}
  \label{tab:tlrt-split}

  \begin{tabular}{%
      c
      >{\raggedleft\arraybackslash}p{2.9cm}
      >{\raggedleft\arraybackslash}p{1.6cm}
      @{\hspace{2em}}
      c
      >{\raggedleft\arraybackslash}p{2.9cm}
      >{\raggedleft\arraybackslash}p{1.6cm}}
    \toprule
      & \multicolumn{2}{c}{\textbf{Fourier basis}}
      & \multicolumn{2}{c}{\textbf{Haar basis}} \\[2pt]
      \cmidrule(lr){2-3} \cmidrule(lr){4-6}
      \(p\) & Test statistic & Bootstrap \(p\)-value
      & \(p\) & Test statistic & Bootstrap \(p\)-value \\
    \midrule
       2 & \makecell[r]{29.3273\\ {\scriptsize p-value = 0}} & 0.067
         &  2 & \makecell[r]{81.5302\\ {\scriptsize p-value = 0}} & 0.000 \\
       3 & \makecell[r]{53.6410\\ {\scriptsize p-value = 0}} & 0.036
         &  3 & \makecell[r]{101.6245\\ {\scriptsize p-value = 0}} & 0.000 \\
       4 & \makecell[r]{191.6895\\ {\scriptsize p-value = 0}} & 0.001
         &  4 & \makecell[r]{187.3650\\ {\scriptsize p-value = 0}} & 0.000 \\
    \bottomrule
  \end{tabular}
\end{table}

\begin{table}[H]
  \centering
  \footnotesize
  \caption{Summary under the null hypothesis $H_{0}\!: \mu_{3}=\mu_{4}$.}
  \label{tab:tlrt-split-run2}

  \begin{tabular}{%
      c
      >{\raggedleft\arraybackslash}p{2.9cm}
      >{\raggedleft\arraybackslash}p{1.6cm}
      @{\hspace{2em}}
      c
      >{\raggedleft\arraybackslash}p{2.9cm}
      >{\raggedleft\arraybackslash}p{1.6cm}}
    \toprule
      & \multicolumn{2}{c}{\textbf{Fourier basis}}
      & \multicolumn{2}{c}{\textbf{Haar basis}} \\[2pt]
      \cmidrule(lr){2-3} \cmidrule(lr){4-6}
      \(p\) & Test statistic & Bootstrap \(p\)-value
      & \(p\) & Test statistic & Bootstrap \(p\)-value \\
    \midrule
       2 & \makecell[r]{2.3134\\ {\scriptsize p-value = 0.0208}} & 0.080
         &  2 & \makecell[r]{15.5921\\ {\scriptsize p-value = 0}} & 0.001 \\
       3 & \makecell[r]{2.8099\\ {\scriptsize p-value = 0.0050}} & 0.317
         &  3 & \makecell[r]{15.1621\\ {\scriptsize p-value = 0}} & 0.002 \\
       4 & \makecell[r]{8.4024\\ {\scriptsize p-value = 0}} & 0.146
         &  4 & \makecell[r]{51.2018\\ {\scriptsize p-value = 0}} & 0.000 \\
       5 & \makecell[r]{8.0698\\ {\scriptsize p-value = 0}} & 0.208
         &  5 & \makecell[r]{54.1445\\ {\scriptsize p-value = 0}} & 0.001 \\
       6 & \makecell[r]{14.3369\\ {\scriptsize p-value = 0}} & 0.046
         &  6 & \makecell[r]{50.2278\\ {\scriptsize p-value = 0}} & 0.001 \\
       7 & \makecell[r]{16.9438\\ {\scriptsize p-value = 0}} & 0.016
         &  7 & \makecell[r]{53.2479\\ {\scriptsize p-value = 0}} & 0.000 \\
       8 & \makecell[r]{18.6811\\ {\scriptsize p-value = 0}} & 0.010
         &  8 & \makecell[r]{52.0832\\ {\scriptsize p-value = 0}} & 0.000 \\
       9 & \makecell[r]{34.9897\\ {\scriptsize p-value = 0}} & 0.002
         &  9 & \makecell[r]{51.8583\\ {\scriptsize p-value = 0}} & 0.000 \\
    \bottomrule
  \end{tabular}
\end{table}

\subsection{Growth Data}
We complement the ECG experiment with the well-known \textbf{Berkeley Growth Study} heights available in \texttt{fda::growth}. The study followed $n=93$, $39$ boys and $54$ girls—from ages $1$ to $18$ years. Stature (cm) was recorded at a common set of $M=31$ ages that are \emph{not} equally spaced, with denser sampling during periods of rapid growth. The object provides the matrices \verb|hgtm| ($31\times39$) and \verb|hgtf| ($31\times54$), containing discretised height curves for boys and girls, respectively, and the vector \verb|age| (length $31$) of sampling times (years). Figure \ref{growth} shows the classwise gender mean curves and several traces.
\begin{figure}[H]          
  \centering
  \includegraphics[height=0.5\textwidth]{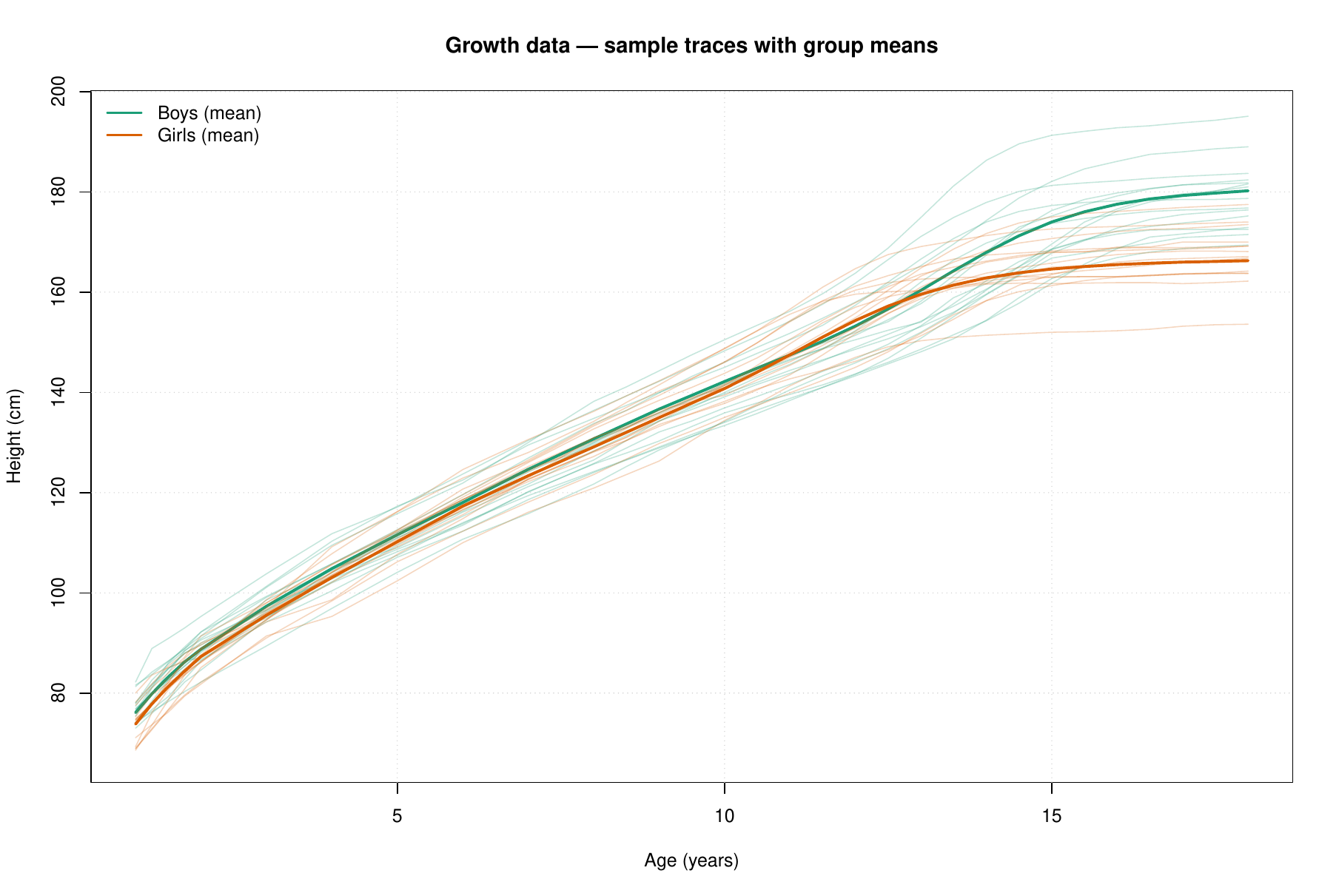}
  \caption{Several traces per gender class (thin) with the
classwise mean curve (bold) over time.}
  \label{growth}
\end{figure}

Each column of \verb|hgtm| or \verb|hgtf| is an essentially monotone growth trajectory exhibiting the characteristic pubertal spurt in mid-adolescence. We obtain functional representations by smoothing these trajectories with B-spline bases (knots at the observed ages), after which the proposed inference procedures test equality of mean growth functions between sexes.


For the Berkeley Growth Study height curves, we tested the null hypothesis 
\(H_{0}:\mu_{\text{boys}} = \mu_{\text{girls}}\) using four scores with a B-spline basis. 
The observed test statistic was \(116.72\) on $4$ degrees of freedom, and the bootstrapped \(p\)-value 
was \(0\), providing strong evidence against the null hypothesis of equal mean height functions.

\section{Outlines of proofs of Theorems}
In this section we present proofs of  Theorem 1 and Theorem 2 stated above, based on Lemmas 1, 2 and 3 stated below. These lemmas are proved in the Supplementary Materials.\\
Consider the hypothesis testing framework in (\ref{hypothesistest}), let us first assume that we are working with population covariance matrices $\Sigma_1, \ldots , \Sigma_k$, and we define 
\begin{align}
    T_{\mathrm{FLRT}}^{\mathcal{P}} = \sum_{j=1}^k n_j (\bar{y}_j - \hat{\mu}_{\mathcal{P}})^\top \Sigma_{j}^{-1} (\bar{y}_j - \hat{\mu}_{{\mathcal{P}}}), \hskip 0.2truein
    \hat{\mu}_{\mathcal{P}} = \left( \sum_{j=1}^k n_j \Sigma_{j}^{-1} \right)^{-1} \left( \sum_{j=1}^k n_j \Sigma_{j}^{-1} \bar{y}_j \right), \label{population test stat}
\end{align}
to be the population version of our test statistic.\\
We define the following:
\begin{equation}
    \bar{z}_j = \Sigma_j^{-1/2}(\bar{y}_j - \mu) = \frac{1}{n_j} \sum_{i=1}^{n_j} \Sigma_j^{-1/2}(y_{ji} - \mu) = \frac{1}{n_j} \sum_{i=1}^{n_j} z_{ji}, \hskip 0.2truein
    \bar{z} = \left [
\frac{1}{\sqrt{n_1}} \sum_{i=1}^{n_1} z_{1i}, \ldots , 
\frac{1}{\sqrt{n_k}} \sum_{i=1}^{n_k} z_{ki} \right ]^\top,
    \label{standardised sample mean}
\end{equation}
where the vectors \( z_{ji} \) are independent across both \( i \) and \( j \).\\
Define an embedding operator \( T_j: \mathbb{R}^p \to \mathbb{R}^{pk} \) as
$T_j(w) = [0_{p(j-1)}^\top, w^\top, 0_{p(k-j)}^\top]^\top$,
where \( w \) occupies the \( j \)th $p \times 1$ block and all other $p \times 1$ blocks consist of zeros.
Now, for \( m \in \{1, \dots, n\} \), define
\begin{align}
    x_m = T_j\left( \frac{1}{\sqrt{n_j}} z_{ji} \right)
\quad \text{if} \quad m = i + \sum_{d=1}^{j-1} n_d, \label{x_is}
\end{align}
where \( j \in \{1, \dots, k\} \) and \( i \in \{1, \dots, n_j\} \). Since this mapping is unique, we denote the unique \( j \) and \( i \) corresponding to \( m \) by \( j(m) \) and \( i(m) \), respectively.

The full test statistic can be decomposed as
$T^{\mathcal{P}}_{\mathrm{FLRT}} = M + 2E_1 + E_2$,
where
\begin{align}
    M = \left( \sum_{s=1}^n x_s \right)^\top Q \left( \sum_{s=1}^n x_s \right), \quad
E_1 = v^\top Q \left( \sum_{s=1}^n x_s \right), \quad
E_2 = v^\top Q v, \label{M E1 E2},
\end{align}
where $v$ is defined in (\ref{definition of v and sigma}) and $Q$ is defined in Theorem 1; also, $\sum_{s=1}^n x_s=\bar{z}$, defined in (\ref{standardised sample mean}).\\
Let \( J_{pk,j} \) be the block diagonal matrix in \( \mathbb{R}^{pk \times pk} \) with the \( j \)th block equal to \( I_p \), with all other blocks consisting of zeros. Then \( I_{pk} = \sum_{j=1}^k J_{pk,j} \). Define 
\begin{align}
    I_{pk,m} = \frac{1}{n_{j(m)}} J_{pk, j(m)}. \label{definition of Ipkm}
\end{align}
Using the properties \( \mathbb{E}[z_{ji}] = 0 \), \( \mathrm{Var}(z_{ji}) = I_p \), and independence across \( i, j \), we have $\mathbb{E}[x_m] = 0, \, \mathrm{Var}(x_m) = I_{pk,m},$
and \( x_i \) and \( x_j \) are independent for \( i \ne j \).

We consider \( Q \) as a \( k \times k \) block matrix with each block \( Q_{i,j} \in \mathbb{R}^{p \times p} \). We define
\[
S_n = T_{FLRT}^{\mathcal{P}}- p(k - 1) - v^\top Q v
= \left( \sum_{s=1}^n x_s \right)^\top Q \left( \sum_{s=1}^n x_s \right)+ 2v^\top Q \left( \sum_{s=1}^n x_s \right)
- p(k - 1).
\]
Define also the filtration \( \mathcal{F}_{nm} = \sigma(x_1, \dots, x_m) \), $m \geq 1$, and construct the martingale sequence \( M_{n,m} = \mathbb{E}[S_n | \mathcal{F}_{nm}] \). Let
\begin{align}
    K_m = \sum_{s=1}^m x_s, \quad D_{nm} = M_{n,m} - M_{n,m-1}, \quad \sigma_n^2 = 2(tr(Q)+2v^\top Qv). \label{definition of KM DNM SIGMAN^2}
\end{align}
By the tower property of conditional expectation and independence, we compute:
\[
\mathbb{E}[x_s^\top Q x_t] = 0 \quad \text{for } s \ne t, \quad
\mathbb{E}[x_s^\top Q x_s] = \mathrm{tr}(I_{pk,s} Q) = \frac{1}{n_{j(s)}} \mathrm{tr}(Q_{j(s),j(s)}).
\]

Next, we compute \( M_{n,m} \) explicitly:
\begin{align*}
M_{n,m} &= K_m^\top Q K_m + \mathrm{tr}\left( \sum_{s = m+1}^n I_{pk,s} Q \right) + 2v^\top Q K_m - p(k - 1).
\end{align*}

Therefore,
\begin{align}
D_{nm} &=M_{n,m}-M_{n,m-1}= 2x_m^\top Q K_{m-1} + x_m^\top Q x_m - \mathrm{tr}(I_{pk,m} Q) + 2v^\top Q x_m. \label{Dnm}
\end{align}
\subsection{Main Lemmas}
We state the following three lemmas before we prove Theorem 1.  The proofs of these three lemmas can be found in the Supplementary Materials.
\begin{lemma} Under the local alternative (\ref{localalt}), \label{lemma2}
    $\sum_{m=1}^n \mathbb{E}[D_{nm}^2 | \mathcal{F}_{nm-1}]/\sigma_n^2   \rightarrow 1$ in probability,
    where all terms in this expression   are defined in or immediately above (\ref{definition of KM DNM SIGMAN^2}).
\end{lemma}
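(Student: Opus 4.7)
The natural starting point is the decomposition $D_{nm}=A_m+B_m+C_m$ already implicit in (\ref{Dnm}), where
\[
A_m=2x_m^{\top}Q K_{m-1},\quad B_m=x_m^{\top}Q x_m-\operatorname{tr}(I_{pk,m}Q),\quad C_m=2v^{\top}Q x_m.
\]
Since $x_m$ is independent of $\mathcal{F}_{n,m-1}$ with $\mathbb{E}[x_m]=0$ and $\operatorname{Var}(x_m)=I_{pk,m}$, expanding $D_{nm}^{2}$ and taking conditional expectations produces six pieces: three squares $\mathbb{E}[A_m^{2}\mid\mathcal{F}_{n,m-1}]=4K_{m-1}^{\top}Q I_{pk,m}Q K_{m-1}$, $\mathbb{E}[B_m^{2}]$ (deterministic), $\mathbb{E}[C_m^{2}]$ (deterministic), and three cross-terms, two of which ($A_mB_m$ and $B_mC_m$) are controlled by the third moment of the entries of $z_{ji}$ while $A_mC_m$ contributes the deterministic-times-linear expression $4v^{\top}Q I_{pk,m}Q K_{m-1}$. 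The plan is to sum each piece over $m$, isolate a deterministic part that converges to $\sigma_n^{2}=2\operatorname{tr}(Q)+4v^{\top}Qv$, and show that the genuinely random residuals are $o_P(\sigma_n^{2})$.

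\textbf{Step 1 (deterministic backbone).} Using $\sum_m I_{pk,m}=I_{pk}$ and the projection identity $Q^{2}=Q$, the $C_m^{2}$ sum immediately collapses to $4v^{\top}Qv$. For the $A_m^{2}$ piece I take expectation, write $\mathbb{E}[K_{m-1}K_{m-1}^{\top}]=\sum_{s<m}I_{pk,s}$ and symmetrise the double sum via $\sum_{s<m}+\sum_{s>m}=\sum_{s\neq m}$; together with $\operatorname{tr}(Q I_{pk}Q I_{pk})=\operatorname{tr}(Q)$ this yields $\mathbb{E}\bigl[\sum_m A_m^{2}\bigr]=2\operatorname{tr}(Q)-2\sum_j n_j^{-1}\operatorname{tr}(Q_{j,j}^{2})$. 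The $B_m^{2}$ sum, evaluated through $\operatorname{Var}(z^{\top}Mz)=2\operatorname{tr}(M^{2})+(\kappa_4^{(j)}-3)\sum_\ell M_{\ell\ell}^{2}$, equals $\sum_j n_j^{-1}\bigl[2\operatorname{tr}(Q_{j,j}^{2})+(\kappa_4^{(j)}-3)\sum_\ell(Q_{j,j})_{\ell\ell}^{2}\bigr]$. Adding the two, the $\operatorname{tr}(Q_{j,j}^{2})/n_j$ terms cancel and one is left with $2\operatorname{tr}(Q)$ plus a remainder bounded by $(\max_j\kappa_4^{(j)}/n_j)\operatorname{tr}(Q)=o(p)$. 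The $B_mC_m$ term is deterministic and a Cauchy--Schwarz bound together with $\|v\|^{2}=O(n\|v_d\|^{2})=o(p)$ (via the local alternative (\ref{localalt})) shows it is $o(p)$ as well. Putting everything together the deterministic part of $\sum_m\mathbb{E}[D_{nm}^{2}\mid\mathcal{F}_{n,m-1}]$ equals $2\operatorname{tr}(Q)+4v^{\top}Qv+o(\sigma_n^{2})=\sigma_n^{2}\bigl(1+o(1)\bigr)$.

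\textbf{Step 2 (random residuals).} The three random quantities to dispose of are $R_1=4\sum_m K_{m-1}^{\top}Q I_{pk,m}Q K_{m-1}$ (centred at its mean computed above), $R_2=8\sum_m v^{\top}Q I_{pk,m}Q K_{m-1}$, and $R_3=4\sum_m \mathbb{E}[A_mB_m\mid\mathcal{F}_{n,m-1}]$, each of which has mean zero once the deterministic part has been subtracted. The plan is to show $\operatorname{Var}(R_i)=o(\sigma_n^{4})$ and invoke Chebyshev. For $R_1$ one expands $K_{m-1}K_{m-1}^{\top}$ into a sum of $x_sx_s^{\top}$ and cross terms, couples the resulting four-fold sums using the eighth-moment assumption (\ref{A1}), and bounds each block by traces of products of $Q$ and $I_{pk,m}$; projection together with $\sum_m I_{pk,m}=I_{pk}$ collapses the bookkeeping into expressions $O(p^{2}/n)$. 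The terms $R_2$ and $R_3$ are easier since each is already linear in $K_{m-1}$; exchanging summations gives $R_2=8v^{\top}Q\sum_s\bigl(\sum_{m>s}I_{pk,m}\bigr)Q x_s$, whose variance is $O(\|v\|^{2}\,p/n)=o(p^{2})$ under the local alternative, while $R_3$ reduces by the third-moment identity $\mathbb{E}[z_i\,z^{\top}Mz]=M_{ii}\mu_3$ to a linear functional of $K_{m-1}$ of the same type.

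\textbf{Main obstacle.} The hardest piece is the variance bound for $R_1$: the double sum $\sum_{m,m'}\mathbb{E}\bigl[K_{m-1}^{\top}Q I_{pk,m}Q K_{m-1}\cdot K_{m'-1}^{\top}Q I_{pk,m'}Q K_{m'-1}\bigr]$ fans out into many index patterns governed by which of the four underlying $x$-indices coincide, and each pattern must be matched with the right block of $Q$. The algebraic miracle that makes this tractable is the pair of identities $Q^{2}=Q$ and $\sum_m I_{pk,m}=I_{pk}$, which reduce everything to traces of the form $\operatorname{tr}(Q_{j,j}^{r})$ bounded by $\operatorname{tr}(Q)=p(k-1)$; together with the moment bounds from (\ref{A1}) this yields $\operatorname{Var}(R_1)=O(p^{2}/n)$, hence $R_1/\sigma_n^{2}\to 0$ in probability, completing the proof.
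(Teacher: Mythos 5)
Your proposal is correct and follows essentially the same route as the paper: the same decomposition of $D_{nm}$ into the quadratic, centred-diagonal, and drift terms, the same computation showing the deterministic part collapses to $2\operatorname{tr}(Q)+4v^{\top}Qv$ via $Q^2=Q$ and $\sum_m I_{pk,m}=I_{pk}$ (including the cancellation of the $\operatorname{tr}(Q_{j,j}^2)/n_j$ terms), and the same Chebyshev argument bounding the variance of the random residuals by $O(p^2/n)$ relative to $\sigma_n^4\asymp p^2$. The only cosmetic difference is that the paper organizes the fourth-moment bookkeeping through the quadratic-form moment lemmas of Bai and Saranadasa (1998), whereas you describe the equivalent trace bounds directly.
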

\begin{lemma} Under the local alternative (\ref{localalt}), \label{lemma3}
     $\sum_{m=1}^n \mathbb{E}( D^2_{nm} I_{(|D_{nm}| \geq \sigma\epsilon)})/\sigma_n^2 \to 0$,
    where all terms in this expression are defined in or immediately above (\ref{definition of KM DNM SIGMAN^2}).
\end{lemma}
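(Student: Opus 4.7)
The plan is to establish the stronger Lyapunov condition $\sum_{m=1}^n \mathbb{E}(D_{nm}^4)/\sigma_n^4 \to 0$; the Lindeberg statement in the lemma then follows at once from the Markov-type bound $\mathbb{E}(D_{nm}^2 \mathbf{1}_{|D_{nm}| \geq \sigma_n \epsilon}) \leq \mathbb{E}(D_{nm}^4)/(\sigma_n^2 \epsilon^2)$.

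I would first split $D_{nm} = A_m + B_m + C_m$ with $A_m = 2 x_m^\top Q K_{m-1}$, $B_m = x_m^\top Q x_m - \text{tr}(I_{pk,m} Q)$, and $C_m = 2 v^\top Q x_m$, then apply $(a+b+c)^4 \leq 27(a^4+b^4+c^4)$ so that the three terms can be treated separately. Two structural facts are used throughout. Since $Q = I_{pk} - P$ is an orthogonal projection, $\|Q\|_{\text{op}} \leq 1$, $\text{tr}(Q) = p(k-1)$, and each diagonal block $Q_{jj}$ has eigenvalues in $[0,1]$, hence $\text{tr}(Q_{jj}^\ell) \leq p$ for all integer $\ell \geq 1$. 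Under the local alternative (\ref{localalt}), $\|v\|^2 = \sum_j n_j \|\Sigma_j^{-1/2} d_j\|^2 \leq n\|v_d\|^2 = o(p)$, so $\sigma_n^2 = 2(\text{tr}(Q) + 2 v^\top Q v) \sim 2p(k-1)$ and $\sigma_n^4 = O(p^2)$.

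For the cross term $A_m$, conditioning on $\mathcal{F}_{n,m-1}$ and using independence of the coordinates of $z_{j(m),i(m)}$ with bounded fourth moment gives $\mathbb{E}(A_m^4 \mid \mathcal{F}_{n,m-1}) \leq C n_{j(m)}^{-2} \|(QK_{m-1})_{j(m)}\|^4 \leq C n_{j(m)}^{-2} \|K_{m-1}\|^4$. A vector Rosenthal-type inequality then yields $\mathbb{E}\|K_{m-1}\|^4 \leq C(\sum_s \mathbb{E}\|x_s\|^2)^2 + C \sum_s \mathbb{E}\|x_s\|^4 = O((pk)^2) + O(p^2/n) = O(p^2)$, so $\sum_m \mathbb{E}(A_m^4) = O(p^2/n)$. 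For $B_m = n_{j(m)}^{-1}(z^\top Q_{j(m)j(m)} z - \text{tr}(Q_{j(m)j(m)}))$, the standard fourth-moment bound for centred quadratic forms in vectors with independent entries (available from assumption (\ref{A1}), which yields $\mathbb{E}(z_{jil}^8) < \infty$), together with $\text{tr}(Q_{jj}^\ell) \leq p$, gives $\mathbb{E}(B_m^4) = O(p^2/n_{j(m)}^4)$, whence $\sum_m \mathbb{E}(B_m^4) = O(p^2/n^3)$. For the linear term $C_m = 2 n_{j(m)}^{-1/2}(Qv)_{j(m)}^\top z_{j(m),i(m)}$, independence of the entries gives $\mathbb{E}(C_m^4) \leq C n_{j(m)}^{-2} \|(Qv)_{j(m)}\|^4$, and summing with $\|Qv\|^2 \leq \|v\|^2 = o(p)$ yields $\sum_m \mathbb{E}(C_m^4) = O(\|v\|^4/n) = o(p^2/n)$.

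Combining the three estimates, $\sum_m \mathbb{E}(D_{nm}^4) = O(p^2/n) = o(p^2) = o(\sigma_n^4)$, which proves the Lyapunov and hence the Lindeberg condition. The main obstacle I anticipate is the $O(p^2)$ control on $\mathbb{E}\|K_{m-1}\|^4$ uniformly in $m$: the second-moment bound is immediate from independence and mean-zero of the $x_s$'s, but the fourth-moment version requires either a vector Rosenthal inequality or an explicit expansion of $\mathbb{E}(x_{s_1}^\top x_{s_2} \, x_{s_3}^\top x_{s_4})$ with careful bookkeeping of which indices coincide, exploiting the block structure of the $x_s$ and $\mathbb{E}(z^2) = I_p$. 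Once that bound is in hand, the remaining estimates for $B_m$ and $C_m$ are direct applications of standard moment bounds for quadratic and linear forms together with the local-alternative condition.
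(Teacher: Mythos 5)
Your proposal is correct and follows essentially the same route as the paper: the paper likewise reduces the Lindeberg condition to the Lyapunov condition $\sum_{m=1}^n \mathbb{E}(D_{nm}^4)/\sigma_n^4 \to 0$, splits $D_{nm}$ into the same three pieces (cross term, centred quadratic form, linear term in $v$), and bounds the quadratic term via a Bai--Saranadasa-type moment inequality and the linear term directly, arriving at the same overall $O(1/n)$ bound. The only minor technical difference is the cross term, where the paper applies a Burkholder-type martingale inequality to $\sum_{s<m} x_m^\top Q x_s$ rather than your conditional fourth-moment bound combined with a vector Rosenthal estimate for $\mathbb{E}\lVert K_{m-1}\rVert^4$; both yield the same $O(p^2/n) = o(\sigma_n^4)$ order.
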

\begin{lemma} \label{lemma4}
    Let $\lambda_{j1} \leq ... \leq \lambda_{jp}$ be the eigenvalues of $\Sigma_j$. Consider $p \leq n$, $\max_{1 \leq j \leq k} \lambda_{j1}^{-1}\sqrt{p}=o(\sqrt{n})$. Then
    \[||P-\hat{P}||_2 = O_p \left (\dfrac{(max_{1 \leq j \leq k}\lambda_{j1}^{-2})\sqrt{p}}{\sqrt{n}} \right),\]
    where $\hat{P}$ is the sampled version of $P$ by replacing the population covariance matrices with the sample covariance matrices defined in Theorem 1.
\end{lemma}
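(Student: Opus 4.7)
The plan is to reduce $\|\hat P - P\|_2$ to blockwise deviations $\|\hat\Sigma_j-\Sigma_j\|_2$ and then propagate through the inverse-square-root map and finally through the projection map $\Sigma \mapsto \Sigma(\Sigma^\top\Sigma)^{-1}\Sigma^\top$. The key observation is that $P = \Sigma B \Sigma^\top$ with $B = (\Sigma^\top\Sigma)^{-1}$ is exactly the orthogonal projection onto the column space of $\Sigma$, so a Wedin-type perturbation bound is available and should carry the main burden.

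First I would establish the baseline concentration $\|\hat\Sigma_j-\Sigma_j\|_2 = O_p(\sqrt{p/n})$. Under (A1) and the model (\ref{modelassumption}), the scores have finite eighth moments, so controlling $\mathbb{E}\|\hat\Sigma_j-\Sigma_j\|_F^2$ entrywise and using $\|\cdot\|_2\le\|\cdot\|_F$ gives this rate. Weyl's inequality combined with the hypothesis $\lambda_{j1}^{-1}\sqrt{p} = o(\sqrt{n})$ then implies that with probability tending to one, $\lambda_{\min}(\hat\Sigma_j) \geq \lambda_{j1}/2$ for every $j$; this event is the regularity needed for all subsequent inverse-style perturbation arguments.

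Next I would transfer the bound to the inverse square root. By the standard operator-Lipschitz inequality for $x\mapsto x^{-1/2}$ restricted to eigenvalues bounded below by $\lambda_{j1}/2$ (Bhatia's matrix analysis), one obtains $\|\hat\Sigma_j^{-1/2} - \Sigma_j^{-1/2}\|_2 = O_p(\lambda_{j1}^{-3/2}\sqrt{p/n})$. Stacking across $j$ yields $\|\hat\Sigma-\Sigma\|_2 = O_p(\lambda_{\min}^{-3/2}\sqrt{p})$ with $\lambda_{\min}:=\min_j\lambda_{j1}$. Finally, since $\sigma_{\min}(\Sigma) = \Theta(\sqrt{n})$ and $\|\Sigma\|_2 = O(\sqrt{n/\lambda_{\min}})$, applying a Wedin-style projection perturbation inequality produces $\|\hat P - P\|_2 = O_p(\lambda_{\min}^{-2}\sqrt{p/n})$, which is the target rate. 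As an alternative route, one can expand
\[
\hat P - P = (\hat\Sigma-\Sigma)B\Sigma^\top + \hat\Sigma(\hat B - B)\Sigma^\top + \hat\Sigma\hat B(\hat\Sigma-\Sigma)^\top
\]
and bound each term by hand using $B - \hat B = B(\hat B^{-1} - B^{-1})\hat B$.

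The hard part will be pinning down the exact power of $\lambda_{\min}$. The naive three-term expansion bounds the middle piece by $\|\Sigma\|_2^2 \|\hat B - B\|_2$, and a direct application of the inverse-perturbation identity $A^{-1}-B^{-1} = A^{-1}(B-A)B^{-1}$ produces an extra factor of $\lambda_{\min}^{-1}$, giving $\lambda_{\min}^{-3}\sqrt{p/n}$ rather than the target $\lambda_{\min}^{-2}\sqrt{p/n}$. To avoid this one must either exploit the projection cancellation $\|\Sigma B\Sigma^\top\|_2 = 1$ in the middle term, or apply Wedin's bound directly to the column spaces of $\Sigma$ and $\hat\Sigma$, thereby bypassing separate treatment of $\hat B - B$. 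Making these cancellations explicit, while simultaneously verifying the high-probability lower bound $\lambda_{\min}(\hat\Sigma_j) \geq \lambda_{j1}/2$ needed to justify every inverse-square-root perturbation step, is the most delicate part of the argument.
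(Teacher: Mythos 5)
Your route is viable and genuinely different from the paper's. The paper never uses the fact that $P$ is an orthogonal projector: it works block by block on $n_s^{1/2}\Sigma_s^{-1/2}(\sum_j n_j\Sigma_j^{-1})^{-1}n_l^{1/2}\Sigma_l^{-1/2}$, and its key device is a \emph{whitened} Neumann expansion $\hat\Sigma_j^{-1}=\Sigma_j^{-1/2}(I+V_j)^{-1}\Sigma_j^{-1/2}$ with $V_j=\Sigma_j^{-1/2}\hat\Sigma_j\Sigma_j^{-1/2}-I$; because $V_j$ is the sample covariance deviation of the isotropic scores $z_{ji}$ from the model (\ref{modelassumption}), $\|V_j\|=O_p(\sqrt{p/n})$ carries no eigenvalue factor, so only two powers of $\lambda_{j1}^{-1}$ ever appear (one from the sandwich $\Sigma_j^{-1/2}V_j\Sigma_j^{-1/2}$, one from the outer factors $\Sigma_s^{-1/2},\Sigma_l^{-1/2}$), and the outer difference $\hat\Sigma_j^{-1/2}-\Sigma_j^{-1/2}$ is shown to be negligible separately. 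You instead take the raw bound $\|\hat\Sigma_j-\Sigma_j\|=O_p(\sqrt{p/n})$, pay $\lambda_{j1}^{-3/2}$ through the inverse square root, and then exploit the projection structure globally: since $\sigma_{\min}(\Sigma)\asymp\sqrt n$ (using boundedness of $\|\Sigma_j\|$, which the paper also assumes), a Wedin-type bound — or equivalently your three-term expansion with the cancellation $\|\Sigma B\Sigma^\top\|_2=1$ in the middle term — yields $O_p(\lambda_{\min}^{-3/2}\sqrt{p/n})$, which is in fact half a power \emph{sharper} than the stated rate (and implies it once $\lambda_{\min}\le 1$, as holds for large $p$ in the functional setting). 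Your diagnosis that the naive treatment of $\hat B-B$ loses a power and that the projection cancellation is the cure is exactly right.

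One caveat to make explicit: the one-sided Wedin inequality $\|\hat P-P\|\lesssim\|\hat\Sigma-\Sigma\|/\sigma_{\min}(\Sigma)$ is usually stated under $\|\hat\Sigma-\Sigma\|=o(\sigma_{\min}(\Sigma))$, i.e.\ $\lambda_{\min}^{-3/2}\sqrt{p}=o(\sqrt n)$, which does \emph{not} follow from the lemma's hypothesis $\max_j\lambda_{j1}^{-1}\sqrt{p}=o(\sqrt n)$ alone. You should therefore use the symmetric form $\|\hat P-P\|\le\max\bigl(\|E\|/\sigma_{\min}(\Sigma),\,\|E\|/\sigma_{\min}(\hat\Sigma)\bigr)$ together with a \emph{direct} high-probability lower bound $\sigma_{\min}(\hat\Sigma)^2=\lambda_{\min}\bigl(\sum_j n_j\hat\Sigma_j^{-1}\bigr)\ge n_1/\|\hat\Sigma_1\|\gtrsim n$, which needs only $\|\hat\Sigma_j\|=O_p(1)$ and not the smallness of $E$. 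With that adjustment, and the high-probability event $\lambda_{\min}(\hat\Sigma_j)\ge\lambda_{j1}/2$ you already flag, the argument closes.
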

\begin{proof}[Proof of Theorem \ref{theorem1}]
By Lemma \ref{lemma4}, we directly obtain the bound
\[
\frac{\left| \bar{z}^\top (\hat{P} - P) \bar{z} \right|}{\sqrt{2p(k - 1)}} = O_p\left( \frac{ \max_{1 \leq j \leq k} \lambda_{j1}^{-2} \cdot p }{ \sqrt{n} } \right).
\]
Combining Lemma \ref{lemma4} and the local alternative (\ref{localalt}), we obtain
\begin{align*}
\frac{\left| v^\top (\hat{P} - P) \bar{z} \right|}{\sqrt{2p(k - 1)}} = o_p\left( \frac{ \max_{1 \leq j \leq k} \lambda_{j1}^{-2} \cdot p }{ \sqrt{n} } \right) \quad \frac{\left| v^\top (\hat{P} - P) v \right|}{\sqrt{2p(k - 1)}} = o_p\left( \frac{ \max_{1 \leq j \leq k} \lambda_{j1}^{-2} \cdot p }{ \sqrt{n} } \right).
\end{align*}
Furthermore, invoking Lemmas \ref{lemma2} and \ref{lemma3}, together with the martingale central limit theorem in (\cite{HallHeyde1980}) , we obtain  asymptotic normality:
\[
\frac{T^{\mathcal{P}}_{\mathrm{FLRT}} - \operatorname{tr}(Q) - v^\top Q v}{\sqrt{2( \operatorname{tr}(Q) + 2 v^\top Q v )}} \xrightarrow{d} \mathcal{N}(0, 1).
\]

We know \( \operatorname{tr}(Q) = p(k - 1) \), and the local alternative condition in (\ref{localalt}) tells that $|v^\top Qv| \leq ||v||^2=o(p)$. Noting that $T_{FLRT}=T_{FLRT}^{\mathcal{P}}+\bar{z}^\top (\hat{P} - P) \bar{z}+2v^\top (\hat{P} - P) \bar{z}+v^\top (\hat{P} - P) v$, the proof of Theorem 1 is now complete.
\end{proof}
\begin{proof}[Proof of Theorem \ref{theorem2}]

All quantities carrying a superscript \(\ast\) are understood to be \emph{conditionally} defined given the observed sample.  Conditional on the observed data, since transformations will still be performed, we have the following bootstrapped versions of key quantities: $K_n^{\ast} \;=\; \sum_{m=1}^n x_m^{\ast}$,
$\mathbb{E}^{\ast}(K_n^{\ast}) = 0$,
$\operatorname{Var}^{\ast}(K_n^{\ast}) = I_{pk}$,
$T_{\text{FLRT}}^{\ast} = \phi(K_n^{\ast},Q^{\ast})$, with
\begin{equation}
\phi(z,S) \;=\;\frac{ z^{\top} S z - p(k-1) - v^{\top} Q v }{ \sqrt{2p(k-1)} } ,
\label{phi_formula}
\end{equation}
and where \(Q^{\ast}\) uses the bootstrap plug-in estimators of the population covariance matrices.

Conditional on the sample, the multivariate Berry–Esseen theorem (\cite{Raic2019}) yields
\begin{equation}
\label{eq:BE}
\bigl|\,
\mathbb{P}^{\ast}(K_n^{\ast}\!\in A) \;-\; \mathcal{N}(0,I_{pk})(A)
\,\bigr|
\;\lesssim\;
\,p^{1/4}\,\sum_{m=1}^{n}\bigl\lVert x_m\bigr\rVert^{3}
\quad
\text{for every convex set } A\subset\mathbb{R}^{pk},
\end{equation}
where, for $a,b \in \mathbb{R}$,  $ a\lesssim b$ means $a \leq cb$ for some constant $c>0$.

Using the same argument as in Lemma~3, one obtains the stochastic approximation
\begin{equation}
\label{eq:phi-approx}
\bigl|
\phi(K_n^{\ast},Q^{\ast})-\phi(K_n^{\ast},Q)
\bigr|
\;=\;
\frac{(K_n^{\ast})^{\top}(Q^{\ast}-Q)K_n^{\ast}}{\sqrt{2p(k-1)}}
\;=\;
O_{p}\!(
c_n
),
\qquad
c_n
=\frac{ \max_{1\le j\le k}\lambda_{j1}^{-2}\,p}{\sqrt{n}},
\end{equation}
where \(\lambda_{j1}\) is the smallest eigenvalue of the \(j\)-th population covariance matrix.  
Under the rate condition stated in the theorem, \(c_n\to0\) as \(n\to\infty\).

Fix \(\varepsilon>0\).  There exists \(N\in\mathbb{N}\) such that, for all \(n>N\),
\begin{equation}
\label{eq:epsilon}
\mathbb{P}^{\ast}\!\Bigl(
\bigl|
\phi(K_n^{\ast},Q^{\ast})-\phi(K_n^{\ast},Q)
\bigr|
\le c_n
\Bigr)
\;\ge\;
1-\varepsilon.
\end{equation}

Let \(x\in\mathbb{R}\) be fixed and set the convex set
\(A_n = \{z\in\mathbb{R}^{pk} : \phi(z,Q)\le x+c_n\}\).
Using~\eqref{eq:epsilon},
\begin{align*}
        &\mathbb{P}^*(\phi(K_n^*,Q^*) \leq x)\\
        &=\mathbb{P}^*(\phi(K_n^*,Q) \leq x +\phi(K_n^*,Q) -\phi(K_n^*,Q^*))\\
        &=\mathbb{P}^*(\{\phi(K_n^*,Q) \leq x +\phi(K_n^*,Q) -\phi(K_n^*,Q^*)\} \cap \{|\phi(K_n^*,Q^*)-\phi(K_n^*,Q)| \leq c_n\})\\
        &+\mathbb{P}^*(\{\phi(K_n^*,Q) \leq x +\phi(K_n^*,Q) -\phi(K_n^*,Q^*)\} \cap \{|\phi(K_n^*,Q^*)-\phi(K_n^*,Q)| > c_n\})\\
        & \leq \mathbb{P}^*(\phi(K_n^*,Q) \leq x +c_n)+\varepsilon=
\mathbb{P}^{\ast}\!\bigl(K_n^{\ast}\in A_n\bigr)+\varepsilon.
\end{align*}
Let \(Z\sim\mathcal{N}(0,I_{pk})\) and consider \(X=\phi(Z,Q)\), where $\phi$ is defined in (\ref{phi_formula}).  Also,  let $F_X(x)$ denote the cumulative distribution function of $X$..
Then
\begin{align*}
\bigl|
\mathbb{P}^{\ast}\!\bigl(\phi(K_n^{\ast},Q^{\ast})\le x\bigr)-\Phi(x)
\bigr|
&\le
\bigl|
\mathbb{P}^{\ast}(K_n^{\ast}\in A_n)-\mathcal{N}(0,I_{pk})(A_n)
\bigr|\\
&\quad+
\bigl|
\mathcal{N}(0,I_{pk})(A_n)-\Phi(x)
\bigr|
+\varepsilon\\
&=
\underbrace{\bigl|
\mathbb{P}^{\ast}(K_n^{\ast}\in A_n)-\mathcal{N}(0,I_{pk})(A_n)
\bigr|}_{\text{(I)}}\\
&\quad+
\underbrace{\bigl|F_X(x+c_n)-F_X(x)\bigr|}_{\text{(II)}}
+
\underbrace{\bigl|F_X(x)-\Phi(x)\bigr|}_{\text{(III)}}
+\varepsilon.
\end{align*}
Term (I) is bounded by \eqref{eq:BE}.  
Term (II) is at most \(c_n\) by the Lipschitz continuity of \(F_X\).
Term (III) is \(O(p^{-1/2})\) by the classical normal approximation for non-central chi square distribution as degree of freedom or non-centrality parameter goes to infinity.
Hence
\[
\bigl|
\mathbb{P}^{\ast}\!\bigl(\phi(K_n^{\ast},Q^{\ast})\le x\bigr)-\Phi(x)
\bigr|
\;\lesssim\;
p^{1/4}\sum_{m=1}^n\lVert x_m\rVert^{3}
\;+\;
c_n
\;+\;
p^{-1/2}
\;+\;
\varepsilon.
\]

Using the tower property,
\[
\mathbb{P}\!\bigl(\phi(K_n^{\ast},Q^{\ast})\le x\bigr)
\;=\;
\mathbb{E}\bigl[
\mathbb{P}^{\ast}\!\bigl(\phi(K_n^{\ast},Q^{\ast})\le x\bigr)
\bigr],
\]
so that
\begin{align*}
\bigl|
\mathbb{P}\!\bigl(\phi(K_n^{\ast},Q^{\ast})\le x\bigr)-\Phi(x)
\bigr|
&\le
\mathbb{E}
\left[
\bigl|
\mathbb{P}^{\ast}\!\bigl(\phi(K_n^{\ast},Q^{\ast})\le x\bigr)-\Phi(x)
\bigr|
\right]\\
&\lesssim
p^{1/4}\sum_{m=1}^n\mathbb{E}\lVert x_m\rVert^{3}
\;+\;
c_n
\;+\;
p^{-1/2}
\;+\;
\varepsilon\\
&\lesssim
\frac{p^{7/4}}{\sqrt{n}}
\;+\;
c_n
\;+\;
p^{-1/2}
\;+\;
\varepsilon,
\end{align*}
which completes the proof.
\end{proof}

\section{Discussion}
In the present article, we propose a projection-based test for equality of mean functions that allows the covariance operators to differ across groups and does not require Gaussianity. The number of retained scores is permitted to grow with sample size under mild rate conditions, and the method accommodates a wide range of orthonormal bases. We also develop a bootstrap calibration for the statistic and provide diagnostics procedures. Although the dimensionality must increase more slowly than the sample size, this requirement is not overly restrictive in practice: there is a natural trade-off between the number of scores and the smoothness of the underlying functions, and our simulations show that flexible basis selection can substantially reduce the number of scores needed. While our presentation focuses on functions observed over intervals, the framework is broadly compatible with other functional domains and has the potential to extend to richer settings.

\section*{Acknowledgements} 
ATAW is grateful to the Australian Research Council for  support through grant DP220102232.

  \bibliography{bibliography.bib}

\vfill \eject

\section{Supplementaty Materials for\\ "Equivalence Test for Mean Functions from Multi-population Functional Data"\\ by Chuang Xu, Andrew T.A. Wood and Yanrong Yang}
\subsection{Proofs of Main Lemmas}
In this section, the technical proofs of all stated Lemmas are presented.
\begin{proof}[Proof of Lemma \ref{lemma2}]
    We now compute \( D_{nm}^2 \), and under the moment assumption \( \mathbb{E}[z_{jid}^4] = 3 + \Delta \), it follows that
\[
\mathbb{E}[x_m^\top Q x_m x_m^\top Q x_m] = \frac{1}{n_j^2} \left( \mathrm{tr}(Q_{j,j})^2 + 2\mathrm{tr}(Q_{j,j}^2) + \Delta \sum_{d=1}^p Q_{j,j}[d,d]^2 \right).
\]

Using this, we compute the conditional second moment:
\begin{align*}
\mathbb{E}[D_{nm}^2 | \mathcal{F}_{nm-1}] &= 
    4K_{m-1}^\top QI_{pk,m} Q K_{m-1}+4K_{m-1}^\top Q \mathbb{E}\{x_m\left(x_m^\top Q x_m-tr(I_{pk,m}Q)\right)\}
    +\dfrac{2}{n_j^2} tr(Q_{j,j}^2)\\
    &+\dfrac{\Delta}{n_j^2}\sum_{d=1}^p Q_{j,j}[d,d]^2
    +4v^\top I_{pk,m} Qv+8K_{m-1}^\top Q I_{pk,m} Qv +4v^\top Q \mathbb{E}(x_mx_m^\top Qx_m)
\end{align*}
We define
\[
\eta_n = \sum_{m=1}^n \mathbb{E}(D_{nm}^2 \mid \mathcal{F}_{nm-1}),
\]
which can be decomposed as
\begin{align*}
\eta_n &= 
4 \sum_{m=1}^n \left\{ K_{m-1}^\top Q I_{pk,m} Q K_{m-1} 
+ K_{m-1}^\top Q \, \mathbb{E}\left[ x_m (x_m^\top Q x_m - \operatorname{tr}(I_{pk,m} Q)) \right] \right\} \\
&\quad + 2 \sum_{j=1}^k \frac{\operatorname{tr}(Q_{j,j}^2)}{n_j} 
+ \Delta \sum_{j=1}^k \frac{\sum_{d=1}^p Q_{j,j}[d,d]^2}{n_j} \\
&\quad + 4 \sum_{m=1}^n v^\top I_{pk,m} Q v 
+ 8 \sum_{m=1}^n K_{m-1}^\top Q I_{pk,m} Q v 
+ 4 \sum_{m=1}^n v^\top Q \, \mathbb{E}(x_m x_m^\top Q x_m) \\
&= \eta_{n1} + \eta_{n2},
\end{align*}
where we define the two components as
\begin{align*}
\eta_{n1} &= 4 \sum_{m=1}^n \left\{ K_{m-1}^\top Q I_{pk,m} Q K_{m-1} 
+ K_{m-1}^\top Q \, \mathbb{E}\left[ x_m (x_m^\top Q x_m - \operatorname{tr}(I_{pk,m} Q)) \right] \right\} \\
&\quad + 2 \sum_{j=1}^k \frac{\operatorname{tr}(Q_{j,j}^2)}{n_j} 
+ \Delta \sum_{j=1}^k \frac{\sum_{d=1}^p Q_{j,j}[d,d]^2}{n_j}, \\
\eta_{n2} &= 4 \sum_{m=1}^n v^\top I_{pk,m} Q v 
+ 8 \sum_{m=1}^n K_{m-1}^\top Q I_{pk,m} Q v 
+ 4 \sum_{m=1}^n v^\top Q \, \mathbb{E}(x_m x_m^\top Q x_m).
\end{align*}

To compute \( \mathbb{E}(\eta_n) \), we analyze \( \mathbb{E}(\eta_{n1}) \) and \( \mathbb{E}(\eta_{n2}) \) separately. We begin with
\[
\mathbb{E}(\eta_{n1}) = 4 \sum_{m=1}^n \mathbb{E}(K_{m-1}^\top Q I_{pk,m} Q K_{m-1}) 
+ 2 \sum_{j=1}^k \frac{\operatorname{tr}(Q_{j,j}^2)}{n_j} 
+ \Delta \sum_{j=1}^k \frac{\sum_{d=1}^p Q_{j,j}[d,d]^2}{n_j}.
\]

By independence and properties of conditional expectation, we obtain:
\begin{align*}
\mathbb{E}(K_{m-1}^\top Q I_{pk,m} Q K_{m-1}) 
&= \sum_{s=1}^{m-1} \mathbb{E}(x_s^\top Q I_{pk,m} Q x_s) 
= \sum_{s=1}^{m-1} \operatorname{tr}(I_{pk,s} Q I_{pk,m} Q) \\
&= \operatorname{tr}\left( \left( \sum_{s=1}^{m-1} I_{pk,s} \right) Q I_{pk,m} Q \right).
\end{align*}

Let us evaluate this for each group \( j \). For instance, when \( m \in \{2, \ldots, n_1\} \), we get:
\[
\operatorname{tr}\left( \left( \sum_{s=1}^{m-1} I_{pk,s} \right) Q I_{pk,m} Q \right)
= \frac{m-1}{n_1^2} \operatorname{tr}(Q_{1,1}^2),
\]
and summing over \( m \leq n_1 \), we obtain:
\[
\sum_{m=1}^{n_1} \mathbb{E}(K_{m-1}^\top Q I_{pk,m} Q K_{m-1}) 
= \frac{n_1 - 1}{2 n_1} \operatorname{tr}(Q_{1,1}^2).
\]

Similarly, for \( m \in \{n_1 + 1, \ldots, n_1 + n_2\} \), we find:
\[
\operatorname{tr}\left( \left( \sum_{s=1}^{m-1} I_{pk,s} \right) Q I_{pk,m} Q \right)
= \frac{1}{n_2} \operatorname{tr}(Q_{1,2}^2) + \frac{m - n_1 - 1}{n_2^2} \operatorname{tr}(Q_{2,2}^2),
\]
and summing over this range gives:
\[
\operatorname{tr}(Q_{1,2}^2) + \frac{n_2 - 1}{2 n_2} \operatorname{tr}(Q_{2,2}^2).
\]

Proceeding inductively, we obtain the full expression:
\[
\sum_{m=1}^n \mathbb{E}(K_{m-1}^\top Q I_{pk,m} Q K_{m-1}) 
= \sum_{j=1}^k \sum_{i=1}^{j-1} \operatorname{tr}(Q_{i,j}^2) 
+ \sum_{j=1}^k \frac{n_j - 1}{2 n_j} \operatorname{tr}(Q_{j,j}^2).
\]

Noting that \( Q_{i,j} = Q_{j,i} \) due to symmetry of \( Q \), we simplify:
\begin{align*}
\mathbb{E}(\eta_{n1}) 
&= 4 \sum_{j=1}^k \sum_{i=1}^{j-1} \operatorname{tr}(Q_{i,j}^2) 
+ 2 \sum_{j=1}^k \frac{n_j - 1}{n_j} \operatorname{tr}(Q_{j,j}^2) 
+ 2 \sum_{j=1}^k \frac{\operatorname{tr}(Q_{j,j}^2)}{n_j} 
+ \Delta \sum_{j=1}^k \frac{\sum_d Q_{j,j}[d,d]^2}{n_j} \\
&= 2 \sum_{i,j} \operatorname{tr}(Q_{i,j}^2) 
+ \Delta \sum_{j=1}^k \frac{\sum_d Q_{j,j}[d,d]^2}{n_j} \\
&= 2 \operatorname{tr}(Q^2) + \Delta \sum_{j=1}^k \frac{\sum_d Q_{j,j}[d,d]^2}{n_j} \\
&= 2p(k-1) + \Delta \sum_{j=1}^k \frac{\sum_d Q_{j,j}[d,d]^2}{n_j}.
\end{align*}

We now compute:
\begin{align*}
\mathbb{E}(\eta_{n2}) 
&= 4 v^\top \left( \sum_{m=1}^n I_{pk,m} \right) Q v 
+ 4 \sum_{m=1}^n v^\top Q \, \mathbb{E}(x_m x_m^\top Q x_m) \\
&= 4 v^\top Q v + 4 v^\top Q \left( \sum_{m=1}^n \mathbb{E}(x_m x_m^\top Q x_m) \right).
\end{align*}

Combining the two parts:
\begin{align*}
\mathbb{E}(\eta_n) 
&= 2p(k-1) + 4 v^\top Q v 
+ 4 v^\top Q \left( \sum_{m=1}^n \mathbb{E}(x_m x_m^\top Q x_m) \right) 
+ \Delta \sum_{j=1}^k \frac{\sum_d Q_{j,j}[d,d]^2}{n_j}.
\end{align*}

We now bound the third term. Observe that:
\[
\left\| \mathbb{E}(x_m x_m^\top Q x_m) \right\|_2 \leq \frac{\| Q_{j(m), j(m)} \|_F}{n_{j(m)}^{3/2}},
\]
hence,
\[
v^\top Q \left( \sum_{m=1}^n \mathbb{E}(x_m x_m^\top Q x_m) \right)
\lesssim \|v_d\|_2 ||Q||_2\|Q\|_F \lesssim \|v_d\|_2 \sqrt{p}.
\]
Also,
\[
\Delta \sum_{j=1}^k \frac{\sum_d Q_{j,j}[d,d]^2}{n_j} \lesssim \frac{\operatorname{tr}(Q^2)}{n} = \frac{tr(Q)}{n}.
\]

Under the alternative hypothesis and the assumed rate, we conclude:
\[
\mathbb{E}(\eta_n) = 2(tr(Q) + 2 v^\top Q v)(1 + o(1))=\sigma_n^2(1 + o(1)).
\]
We aim to show that \( \dfrac{\operatorname{Var}(\eta_n)}{\sigma_n^4} \to 0 \) as \( n \to \infty \). To do this, we consider the decomposition
\[
\eta_n = \eta_{n1} + \eta_{n2},
\]
and show that both \( \dfrac{\operatorname{Var}(\eta_{n1})}{\sigma_n^4} \to 0 \) and \( \dfrac{\operatorname{Var}(\eta_{n2})}{\sigma_n^4} \to 0 \).

Recall that
\begin{align*}
\eta_{n1}
&= 4 \sum_{m=1}^n \left\{ 
K_{m-1}^\top Q I_{pk,m} Q K_{m-1} 
+ K_{m-1}^\top Q \, \mathbb{E}\left[ x_m \left( x_m^\top Q x_m \right) \right]
\right\} 
+ c \\
&= 4L_1 + 4L_2 + c,
\end{align*}
where we define:
\begin{align*}
L_1 &= \sum_{m=1}^n K_{m-1}^\top Q I_{pk,m} Q K_{m-1}, \\
L_2 &= \sum_{m=1}^n K_{m-1}^\top Q \, \mathbb{E}\left[ x_m \left( x_m^\top Q x_m \right) \right].
\end{align*}

We expand \( L_1 \) as:
\begin{align*}
L_1 
&= \sum_{m=1}^n \left( \sum_{s=1}^{m-1} x_s^\top Q I_{pk,m} Q x_s 
+ 2 \sum_{d=1}^{m-1} \sum_{t=1}^{d-1} x_d^\top Q I_{pk,m} Q x_t \right) \\
&= \sum_{s=1}^{n-1} x_s^\top \left( \sum_{m=s+1}^{n} Q I_{pk,m} Q \right) x_s 
+ 2 \sum_{d=1}^{n-1} \sum_{t=1}^{d-1} x_d^\top \left( \sum_{m=d+1}^{n} Q I_{pk,m} Q \right) x_t \\
&= L_{11} + 2L_{12}.
\end{align*}

Using Lemma 2.3 from \cite{Bai1998}, we have:
\begin{align*}
\mathbb{E}(L_{11}^2) 
&= \mathbb{E} \left( \sum_{s=1}^{n-1} \sum_{m=s+1}^{n} \frac{1}{n_s n_m} 
z_{j(s)i(s)}^\top Q_{j(s),j(m)}^2 z_{j(s)i(s)} \right)^2 \\
&\lesssim \frac{ \left( \sum_{s=1}^{n-1} \sum_{m=s+1}^n \operatorname{tr}(Q_{j(s),j(m)}^2) \right)^2 + n^3 \operatorname{tr}(Q^4) }{n^4} 
\lesssim \frac{ \operatorname{tr}(Q)^2 }{n}.
\end{align*}

Similarly, applying Lemma 2.1 from \cite{Bai1998} to \( L_{12} \) (a martingale difference form), we get:
\begin{align*}
\mathbb{E}(L_{12}^2) 
&= \mathbb{E} \left( \sum_{d=1}^{n-1} \sum_{t=1}^{d-1} x_d^\top 
\left( \sum_{m=d+1}^{n} Q I_{pk,m} Q \right) x_t \right)^2 \\
&\lesssim \frac{ \sum Q_{j(d),j(m)} Q_{j(m),j(s)} Q_{j(m),j(s)} Q_{j(m),j(d)} }{n^4} 
\lesssim \frac{ \operatorname{tr}(Q^4) }{n}
\lesssim \frac{ \operatorname{tr}(Q)^2 }{n}.
\end{align*}

We now control the second term:
\begin{align*}
\mathbb{E}(L_2^2) 
&= \mathbb{E} \left( \sum_{s=1}^{n-1} x_s^\top 
\left( \sum_{m=s+1}^n Q \, \mathbb{E}[x_m (x_m^\top Q x_m)] \right) \right)^2 \\
&= \sum_{s=1}^{n-1} \mathbb{E} \left( x_s^\top 
\left( \sum_{m=s+1}^n Q \, \mathbb{E}[x_m (x_m^\top Q x_m)] \right) \right)^2 \\
&\lesssim \frac{1}{n^4} \sum_{s=1}^{n-1} \mathbb{E} \left( z_{j(s), i(s)}^\top 
\left( \sum_{m=s+1}^n Q_{j(s), j(m)} \operatorname{diag}(Q_{j(m), j(m)}) \right) \right)^2 \\
&\lesssim \frac{ \sum_{s=1}^{n-1} 
\left\| \sum_{m=s+1}^n Q_{j(s), j(m)} \operatorname{diag}(Q_{j(m), j(m)}) \right\|^2 }{n^4} \\
&\lesssim \frac{ \sum_{s=1}^{n-1} \sum_{m=s+1}^n \operatorname{tr}(Q_{j(s),j(m)}^2) \operatorname{tr}(Q_{j(m),j(m)}^2) }{n^4} 
\lesssim \frac{ \operatorname{tr}(Q^2)^2 }{n}.
\end{align*}

From the decomposition
\[
\eta_{n2} = 8 \sum_{m=1}^n K_{m-1}^\top Q I_{pk,m} Q v + c,
\]
we control the variance via a similar technique:
\begin{align*}
\mathbb{E} \left( \sum_{m=1}^n K_{m-1}^\top Q I_{pk,m} Q v \right)^2
&= \mathbb{E} \left( \sum_{s=1}^{n-1} x_s^\top 
\left( \sum_{m=s+1}^n Q I_{pk,m} Q v \right) \right)^2 \\
&\lesssim n \cdot v^\top Q v.
\end{align*}

Combining all results, we obtain:
\[
\dfrac{\operatorname{Var}(\eta_{n1})}{\sigma_n^4} \lesssim \frac{1}{n}, 
\qquad 
\dfrac{\operatorname{Var}(\eta_{n2})}{\sigma_n^4} \lesssim \dfrac{1}{\sigma^2}.
\]
It follows that:
\[
\dfrac{\operatorname{Var}(\eta_n)}{\sigma_n^4} = o(1).
\]
This completes the proof of Lemma 1.
\end{proof}
\begin{proof}[Proof of Lemma \ref{lemma3}]
    The conclusion of the lemma is true if we show
    \[\dfrac{1}{\sigma_n^4} \sum_{m=1}^n \mathbb{E}( D^4_{nm}) \to 0.\]
    
    We now derive an upper bound on the fourth moment sum \( \sum_{m=1}^n \mathbb{E}(D_{nm}^4) \).

Recall from earlier that
\begin{align*}
D_{nm}^4 
&= \left\{ 2 x_m^\top Q K_{m-1} + \left( x_m^\top Q x_m - \operatorname{tr}(I_{pk,m} Q) \right) + 2 v^\top Q x_m \right\}^4 \\
&\lesssim (x_m^\top Q K_{m-1})^4 
+ \left(x_m^\top Q x_m - \operatorname{tr}(I_{pk,m} Q) \right)^4 
+ (2 v^\top Q x_m)^4 \\
&=: A_{m1}^4 + A_{m2}^4 + A_{m3}^4.
\end{align*}

We analyze \( A_{m1}^4 \) and \( A_{m2}^4 \), since direct computations can show
\[\sum_{m=1}^n \mathbb{E}(A_{m3}^4) \lesssim \dfrac{(v^\top Q v)^2}{n^2}.\]

Using Lemma 2.7 from \cite{Bai1998} and assuming finite 8th moments, we have:
\begin{align*}
\mathbb{E}(A_{m2}^4) 
&= \mathbb{E}\left[ \left( x_m^\top Q x_m - \operatorname{tr}(I_{pk,m} Q) \right)^4 \right] \\
&= \frac{1}{n_{j(m)}^4} 
\mathbb{E}\left[ \left( z_{j(m),i(m)}^\top Q_{j(m),j(m)} z_{j(m),i(m)} 
- \operatorname{tr}(Q_{j(m),j(m)}) \right)^4 \right] \\
&\lesssim \frac{1}{n_{j(m)}^4} \operatorname{tr}(Q_{j(m),j(m)}^2)^2.
\end{align*}

Since \( n_j / n \to c_j > 0 \), we sum over \( m \) to get:
\begin{align*}
\sum_{m=1}^n \mathbb{E}(A_{m2}^4) 
&\lesssim \sum_{j=1}^k \frac{1}{n_j^3} \operatorname{tr}(Q_{j,j}^2)^2 
\lesssim \frac{1}{n^3} \sum_{j=1}^k \operatorname{tr}(Q_{j,j}^2)^2 
\lesssim \frac{1}{n^3} \operatorname{tr}(Q^2)^2 
= \frac{p^2(k-1)^2}{n^3}.
\end{align*}

We treat \( x_m^\top Q x_s \) for \( s < m \) as a martingale difference sequence with respect to the filtration 
\( \mathcal{F}_s = \sigma(x_1, \dots, x_s) \). Applying Lemma 2.1 from \cite{Bai1998}, we get:
\begin{align*}
\mathbb{E}(A_{m1}^4) 
&\leq \left( \sum_{s=1}^{m-1} \mathbb{E}(x_m^\top Q x_s x_m^\top Q x_s) \right)^2 
+ \sum_{s=1}^{m-1} \mathbb{E}\left( x_m^\top Q x_s \right)^4.
\end{align*}

The first term involves:
\begin{align*}
\left( \sum_{s=1}^{m-1} \mathbb{E}(x_m^\top Q x_s x_m^\top Q x_s) \right)^2 
&\lesssim \frac{\left( \sum_{s=1}^{m-1} \operatorname{tr}(Q_{j(m),j(s)}^2) \right)^2}{n^4}.
\end{align*}

Summing over \( m \), we obtain:
\begin{align*}
\sum_{m=1}^n \left( \sum_{s=1}^{m-1} \mathbb{E}(x_m^\top Q x_s x_m^\top Q x_s) \right)^2 
&\lesssim \frac{ \left( \sum_{m=1}^n \sum_{s=1}^{m-1} \operatorname{tr}(Q_{j(m),j(s)}^2) \right)^2 }{n^4} \\
&\lesssim \frac{n^3 \operatorname{tr}(Q^2)^2}{n^4} = \frac{\operatorname{tr}(Q)^2}{n}.
\end{align*}

Using Lemmas 2.1 and 2.3 from \cite{Bai1998}, and denoting \( Q_{j,i}[l,t] \) as the \((l,t)\)-entry of \( Q_{j,i} \), we estimate:
\begin{align*}
\mathbb{E}\left( x_m^\top Q x_s \right)^4 
&\lesssim \frac{1}{n^4} \mathbb{E} \left( \sum_{l=1}^p z_{j(m),i(m),l} \sum_{t=1}^p Q_{j(m),j(s)}[l,t] z_{j(s),i(s),t} \right)^4 \\
&\lesssim \frac{ \operatorname{tr}(Q_{j(m),j(s)}^2) + \operatorname{tr}(Q_{j(m),j(s)}^4) }{n^4}.
\end{align*}

Therefore,
\begin{align*}
\sum_{m=1}^n \sum_{s=1}^{m-1} \mathbb{E}\left( x_m^\top Q x_s \right)^4 
&\lesssim \frac{n^3 \operatorname{tr}(Q)^2}{n^4} = \frac{\operatorname{tr}(Q)^2}{n}.
\end{align*}

Combining all the bounds, we obtain:
\[
\dfrac{1}{\sigma_n^4}\sum_{m=1}^n \mathbb{E}(D_{nm}^4) 
\lesssim \dfrac{1}{n}
\to 0 \quad \text{as } n \to \infty.
\]
Thus, the fourth moment of the martingale difference array is asymptotically negligible under the assumed conditions.

\end{proof}
\begin{proof}[Proof of Lemma \ref{lemma4}]
    We now compare the block matrices \( P \) and \( \hat{P} \). Since \( k \) is fixed, it suffices to examine the convergence of each sub-block of these matrices individually. Each sub-block can be expressed as:
\[
n_s^{1/2} \Sigma_s^{-1/2} \left( \sum_{j=1}^k n_j \Sigma_j^{-1} \right)^{-1} n_l^{1/2} \Sigma_l^{-1/2}
- 
n_s^{1/2} \hat{\Sigma}_s^{-1/2} \left( \sum_{j=1}^k n_j \hat{\Sigma}_j^{-1} \right)^{-1} n_l^{1/2} \hat{\Sigma}_l^{-1/2}.
\]

Without loss of generality, we can factor out the \( n_j \) terms to focus on the main object of interest:
\[
\Sigma_s^{-1/2} \left( \sum_{j=1}^k \Sigma_j^{-1} \right)^{-1} \Sigma_l^{-1/2}
-
\hat{\Sigma}_s^{-1/2} \left( \sum_{j=1}^k \hat{\Sigma}_j^{-1} \right)^{-1} \hat{\Sigma}_l^{-1/2}.
\]

We begin by analyzing the inverse of the sample covariance matrix. Using a matrix expansion, we write:
\[
\hat{\Sigma}_j^{-1} = \Sigma_j^{-1/2} \left( \Sigma_j^{-1/2} \hat{\Sigma}_j \Sigma_j^{-1/2} - I + I \right)^{-1} \Sigma_j^{-1/2}.
\]
Define
\[
V_j = \Sigma_j^{-1/2} \hat{\Sigma}_j \Sigma_j^{-1/2} - I,
\]
so that we may expand:
\begin{align*}
\hat{\Sigma}_j^{-1} &= \Sigma_j^{-1} - \Sigma_j^{-1/2} V_j \Sigma_j^{-1/2} + \Sigma_j^{-1/2} V_j^2 \Sigma_j^{-1/2} - \cdots, \\
\sum_{j=1}^k \hat{\Sigma}_j^{-1} &= \sum_{j=1}^k \Sigma_j^{-1} - \sum_{j=1}^k \Sigma_j^{-1/2} V_j \Sigma_j^{-1/2} + \sum_{j=1}^k \Sigma_j^{-1/2} V_j^2 \Sigma_j^{-1/2} - \cdots \\
&= A + B,
\end{align*}
where \( A = \sum_{j=1}^k \Sigma_j^{-1} \) and \( B \) collects the remaining error terms.

One can easily show that 
\[
\operatorname{tr}(V_j^2) = O_p\left( \frac{p}{n} \right).
\]

Using a Neumann series expansion and bounding the spectral norm, we obtain:
\begin{align*}
\left( \sum_{j=1}^k \hat{\Sigma}_j^{-1} \right)^{-1} 
&= A^{-1} - A^{-1} B A^{-1} + A^{-1} B A^{-1} B A^{-1} - \cdots \\
&= A^{-1} + D,
\end{align*}
where the error term \( D \) satisfies:
\begin{align*}
\| D \| &\leq \| A^{-1} \| \cdot \left\| B A^{-1} + B A^{-1} B A^{-1} + \cdots \right\| \\
&\leq \| A^{-1} \|^2 \| B \| \cdot \sum_{i=0}^\infty \left( \| B \| \| A^{-1} \| \right)^i \\
&= \frac{ \| A^{-1} \|^2 \| B \| }{ 1 - \| A^{-1} \| \| B \| }.
\end{align*}

We now analyze \( \| B \| \). Using the submultiplicativity of the operator norm:
\begin{align*}
\| B \| 
&\leq \sum_{j=1}^k \left\| \Sigma_j^{-1/2} V_j \Sigma_j^{-1/2} \right\| 
\leq k \cdot \max_{1 \leq j \leq k} \| \Sigma_j^{-1} \| \cdot \sum_{i=1}^\infty \| V_j \|^i \\
&= \frac{ k \cdot \max_{1 \leq j \leq k} \lambda_{j1}^{-1} \cdot \| V_j \| }{ 1 - \| V_j \| }.
\end{align*}

For the matrix \( A^{-1} \), we have:
\begin{align*}
\| A^{-1} \| 
&= \frac{1}{ \lambda_{\min}(A) } 
\leq \frac{1}{ \sum_{j=1}^k \lambda_{\min}(\Sigma_j^{-1}) }
= \frac{1}{ \sum_{j=1}^k \| \Sigma_j \|^{-1} } \\
&= \frac{1}{ k \cdot \min_{1 \leq j \leq k} \| \Sigma_j \|^{-1} } 
= \frac{ \max_{1 \leq j \leq k} \| \Sigma_j \| }{ k }.
\end{align*}

Assuming the spectral norms of \( \Sigma_j \) are uniformly bounded above, it follows that \( \| A^{-1} \| \) is bounded.

We conclude:
\[
\| B \| = O_p\left( \frac{ \max_{1 \leq j \leq k} \lambda_{j1}^{-1}\sqrt{p} }{ \sqrt{n} } \right).
\]

Returning to our original target expression:
\begin{align*}
&\hat{\Sigma}_s^{-1/2} \left( \sum_{j=1}^k \hat{\Sigma}_j^{-1} \right)^{-1} \hat{\Sigma}_l^{-1/2} \\
= &(\hat{\Sigma}_s^{-1/2} - \Sigma_s^{-1/2} + \Sigma_s^{-1/2})
\left( \sum_{j=1}^k \hat{\Sigma}_j^{-1} \right)^{-1}
(\hat{\Sigma}_l^{-1/2} - \Sigma_l^{-1/2} + \Sigma_l^{-1/2}) \\
= &(\hat{\Sigma}_s^{-1/2} - \Sigma_s^{-1/2}) \left( \sum_{j=1}^k \hat{\Sigma}_j^{-1} \right)^{-1} (\hat{\Sigma}_l^{-1/2} - \Sigma_l^{-1/2}) \\
&+ (\hat{\Sigma}_s^{-1/2} - \Sigma_s^{-1/2}) \left( \sum_{j=1}^k \hat{\Sigma}_j^{-1} \right)^{-1} \Sigma_l^{-1/2} \\
&+ \Sigma_s^{-1/2} \left( \sum_{j=1}^k \hat{\Sigma}_j^{-1} \right)^{-1} (\hat{\Sigma}_l^{-1/2} - \Sigma_l^{-1/2}) \\
&+ \Sigma_s^{-1/2} \left( \sum_{j=1}^k \hat{\Sigma}_j^{-1} \right)^{-1} \Sigma_l^{-1/2}.
\end{align*}

By Theorem 6.2 in \cite{higham2008functions} and the bounded trace of population covariance matrices, we can see that \( \| \hat{\Sigma}_j^{-1/2} - \Sigma_j^{-1/2} \| = O_p(\frac{\lambda_{j1}^{-1/2}}{n}) \), thus the dominant term is:
\[
\Sigma_s^{-1/2} \left( \sum_{j=1}^k \hat{\Sigma}_j^{-1} \right)^{-1} \Sigma_l^{-1/2}.
\]

We now bound the deviation between the estimated and true blocks:
\begin{align*}
&\left\| \Sigma_s^{-1/2} \left( \sum_{j=1}^k \Sigma_j^{-1} \right)^{-1} \Sigma_l^{-1/2} 
- \hat{\Sigma}_s^{-1/2} \left( \sum_{j=1}^k \hat{\Sigma}_j^{-1} \right)^{-1} \hat{\Sigma}_l^{-1/2} \right\| \\
&\leq \left\| \Sigma_s^{-1/2} \left( \sum_{j=1}^k \Sigma_j^{-1} \right)^{-1} \Sigma_l^{-1/2}
- \Sigma_s^{-1/2} \left( \sum_{j=1}^k \hat{\Sigma}_j^{-1} \right)^{-1} \Sigma_l^{-1/2} \right\| \\
&= \left\| \Sigma_s^{-1/2} D \Sigma_l^{-1/2} \right\| 
\leq \max_{1 \leq j \leq k} \lambda_{j1}^{-1} \cdot \| D \| \\
&= O_p\left( \frac{ \max_{1 \leq j \leq k} \lambda_{j1}^{-2}\sqrt{p} }{ \sqrt{n} } \right).
\end{align*}
This completes the proof of the Lemma.
\end{proof}

\subsection{Imperfect sampling} \label{imperfect sampling}
Let $X_{ji}(\omega) \in L^2([0,1])$ be a random function, $\epsilon_{jil}$ be the measurement error with mean zero and variance $\sigma^2$, which are independent from each other. We observe the following:
\[ Y_{jil}=X_{ji}(t_{l})+\epsilon_{jil}, \quad 1 \leq j \leq k, \quad1 \leq i \leq n_j, \quad 1 \leq l \leq m,\]
where $\mathbb{E}(X_{ji}(t))=\mu_j(t)$.
Now let us consider the calculation of a single score given by the $L^2$ orthonormal basis $\{b_q(t)\}_{q=1}^\infty$. We aim to estimate the true score $\langle \mu_j,b_q\rangle$ due to the functional nature that smooth functions can be well approximated by several scores.\\
Let us consider using some easy quadratures for example the trapezium rules. The following can be done:
\begin{align*}
    N_{ji,q}=&\sum_{l=1}^{m-1} \frac{(t_{l+1}-t_{l})(Y_{ji(l+1)}b_q(t_{l+1})+Y_{ji(l)}b_q(t_{l}))}{2}\\
    =&\sum_{l=1}^{m-1} \frac{(t_{l+1}-t_{l})(X_{ji}(t_{l+1})b_q(t_{l+1})+X_{ji}(t_l)b_q(t_{l}))}{2}\\
    +&\sum_{l=1}^{m-1} \frac{(t_{l+1}-t_{l})(\epsilon_{ji(l+1)}b_q(t_{l+1})+\epsilon_{ji(l)}b_q(t_{l}))}{2}\\
    =&M_{ji,q}+E_{ji,q}
\end{align*}
If we are after an estimator for $\langle \mu_j,b_q\rangle$, then the most naive one to work with is then $\bar{N}_{j,q}= \frac{1}{n}\sum_{i=1}^n N_{ji,q}$, let us first introduce another notation
\[ I_{ji,q}=\int X_{ji}(t)b_q(t)dt.\]
The naive estimator for $\langle \mu_j,b_q \rangle$ is $\bar{N}_{j,q}$. It can be analysed as follows
\begin{align*}
    \bar{N}_{j,q}&= \frac{1}{n}\sum_{i=1}^n N_{ji,q}=\frac{1}{n}\sum_{i=1}^n M_{ji,q}+\frac{1}{n}\sum_{i=1}^n E_{ji,q}\\
    &=\frac{1}{n}\sum_{i=1}^n I_{ji,q}+\frac{1}{n}\sum_{i=1}^n (M_{ji,q}-I_{ji,q})+\frac{1}{n}\sum_{i=1}^n E_{ji,q}.
\end{align*}
The first term is $\frac{1}{n}\sum_{i=1}^n I_{ji,q}=\langle \bar{X}_{j},b_q\rangle$, direct result shows that $$\frac{1}{n}\sum_{i=1}^n I_{ji,q}-\langle \mu_j,b_q\rangle=O_p(\frac{1}{\sqrt{n}}).$$ The second term can be further analysed using the error bound of trapezium rule, 
\begin{align*}
    (M_{ji,q}-I_{ji,q})&=\sum_{l=1}^{m-1} \frac{(t_{l+1}-t_{l})(X_{ji}(t_{l+1})b_q(t_{l+1})+X_{ji}(t_l)b_q(t_{l}))}{2}-\int X_{ji}(t)b_q(t)dt\\
    &=O_p(\frac{1}{m^2}),
\end{align*}
therefore we get $\frac{1}{n}\sum_{i=1}^n (M_{ji,q}-I_{ji,q})=O_p(\frac{1}{m^2})$.
The basis functions are bounded functions, therefore we can analyse the third term as
\begin{align*}
    E_{ji,q}&=\sum_{l=1}^{m-1} \frac{(t_{l+1}-t_{l})(\epsilon_{ji(l+1)}b_q(t_{l+1})+\epsilon_{ji(l)}b_q(t_{l}))}{2}\\
    &\lesssim \sum_{l=1}^{m-1} \frac{(t_{l+1}-t_{l})(\epsilon_{ji(l+1)}+\epsilon_{ji(l)})}{2}\\
    &\lesssim \max_{l=1,...,m-1}(t_{l+1}-t_l) \sum_{l=1}^{m} \epsilon_{jil}.
\end{align*}
The order for the third term can be calculated as 
\begin{align*}
    \frac{1}{n}\sum_{i=1}^n E_{ji,q}&= \frac{\max_{l=1,...,m-1}(t_{l+1}-t_l)}{n} \sum_{i=1}^n\sum_{l=1}^{m} \epsilon_{jil}\\
    &=O_p(\frac{\max_{l=1,...,m-1}(t_{l+1}-t_l)\sqrt{m}}{\sqrt{n}}).
\end{align*}
We can see that
\[\bar{N}_{j,q}-\langle \mu_j,b_q\rangle=O_p(\frac{1}{\sqrt{n}})+O_p(\frac{1}{m^2})+O_p(\frac{\max_{l=1,...,m-1}(t_{l+1}-t_l)\sqrt{m}}{\sqrt{n}}).\]
It should be noted that better applicable quadrature would improve the rate of convergence. If one assumes common design, then the third order becomes $O_p(\frac{1}{\sqrt{nm}})$.
\subsection{Additional Numerical Results}
In this section, a few additional tables are presented for the purpose of simulation and real data analysis.
\begin{table}[H]
    \centering
    \begin{tabular}{c c}
        \toprule
        \(\nu\) & Rejection Proportion \\
        \midrule
        0.5  & 0.0672 \\
        1.0  & 0.0732 \\
        1.5  & 0.0736 \\
        2.0  & 0.0744 \\
        5.0  & 0.0765 \\
        10.0 & 0.0741 \\
        \bottomrule
    \end{tabular}
    \caption{Size of the Horváth testing procedure for various \(\nu\) values, where \(\nu\) represents the smoothness parameter. The data is generated under the same set up as Table \ref{tab:combined_single}}
    \label{tab:table3}
\end{table}
\begin{table}[H]
    \centering
    \begin{tabular}{c c c c c c c c}
        \toprule
        c Value & TLRT Haar & TLRT Fourier & T2 Haar & T2 Fourier & CS & L2b & Fb \\
        \midrule
         0.00  & 0.0508 & 0.0524 & 0.0480 & 0.0536 & 0.0526 & 0.0524 & 0.0452 \\
         0.107 & 0.0596 & 0.0614 & 0.0628 & 0.0612 & 0.0640 & 0.0656 & 0.0578 \\
         0.214 & 0.0820 & 0.0910 & 0.0830 & 0.0888 & 0.0870 & 0.0876 & 0.0762 \\
         0.321 & 0.1440 & 0.1650 & 0.1480 & 0.1540 & 0.1460 & 0.1470 & 0.1310 \\
         0.429 & 0.2250 & 0.2650 & 0.2320 & 0.2400 & 0.2420 & 0.2420 & 0.2230 \\
         0.536 & 0.3260 & 0.3920 & 0.3300 & 0.3640 & 0.3530 & 0.3530 & 0.3280 \\
         0.643 & 0.4620 & 0.5080 & 0.4770 & 0.4680 & 0.4780 & 0.4800 & 0.4490 \\
         0.750 & 0.6090 & 0.6610 & 0.6160 & 0.6270 & 0.6520 & 0.6520 & 0.6200 \\
         0.857 & 0.7320 & 0.7920 & 0.7380 & 0.7580 & 0.7910 & 0.7890 & 0.7640 \\
         0.964 & 0.8300 & 0.8750 & 0.8350 & 0.8510 & 0.8870 & 0.8860 & 0.8650 \\
         1.070 & 0.9090 & 0.9370 & 0.9090 & 0.9210 & 0.9500 & 0.9510 & 0.9400 \\
         1.180 & 0.9470 & 0.9760 & 0.9510 & 0.9680 & 0.9850 & 0.9850 & 0.9810 \\
         1.290 & 0.9790 & 0.9920 & 0.9790 & 0.9890 & 0.9950 & 0.9950 & 0.9940 \\
        \bottomrule
    \end{tabular}
    \caption{Proportion of rejections for different methods at \(p = 2\) across various \(c\) values.}
    \label{tab:table4}
\end{table}
\begin{table}[H]
  \centering
  \caption{Bootstrapped p-value summary under $H_{0}\!: \mu_1=\mu_2=\mu_{3}=\mu_{4}$.}
  \label{tab:tlrt-split supp}

  \begin{tabular}{%
      c
      >{\raggedleft\arraybackslash}p{2.9cm}
      >{\raggedleft\arraybackslash}p{1.6cm}
      @{\hspace{2em}}
      c
      >{\raggedleft\arraybackslash}p{2.9cm}
      >{\raggedleft\arraybackslash}p{1.6cm}}
    \toprule
      & \multicolumn{2}{c}{\textbf{Fourier basis}}
      & \multicolumn{2}{c}{\textbf{Haar basis}} \\[2pt]
      \cmidrule(lr){2-3} \cmidrule(lr){4-6}
      \(p\) & Test statistic & bootstrap \(p\)-value
      & \(p\) & Test statistic & bootstrap \(p\)-value \\
    \midrule
       6 & \makecell[r]{635.7618\\ {\scriptsize p-value = 0}} & 0.000
         &  6 & \makecell[r]{350.3186\\ {\scriptsize p-value = 0}} & 0.000 \\
       7 & \makecell[r]{700.9740\\ {\scriptsize p-value = 0}} & 0.000
         &  7 & \makecell[r]{700.8406\\ {\scriptsize p-value = 0}} & 0.000 \\
       8 & \makecell[r]{755.2110\\ {\scriptsize p-value = 0}} & 0.000
         &  8 & \makecell[r]{808.7973\\ {\scriptsize p-value = 0}} & 0.000 \\
       9 & \makecell[r]{899.0088\\ {\scriptsize p-value = 0}} & 0.000
         &  9 & \makecell[r]{817.1826\\ {\scriptsize p-value = 0}} & 0.000 \\
    \bottomrule
  \end{tabular}
\end{table}
\end{document}